\documentclass[12pt,draftcls,onecolumn]{IEEEtran}

\newif\ifreport\reporttrue

\IEEEoverridecommandlockouts
\makeatletter
\def\ps@headings{%
\def\@oddhead{\mbox{}\scriptsize\rightmark \hfil \thepage}%
\def\@evenhead{\scriptsize\thepage \hfil \leftmark\mbox{}}%
\def\@oddfoot{}%
\def\@evenfoot{}}
\makeatother
\pagestyle{headings}

\usepackage[dvips]{epsfig,graphicx}
\usepackage[]{subfigure}
\usepackage{verbatim}
\usepackage{amsmath}
\usepackage{cite}
\usepackage[]{latexsym,times}
\usepackage{verbatim}
\usepackage{mathrsfs}
\usepackage{eucal}
\usepackage{amssymb}
\usepackage{footmisc}
\usepackage{algorithm,algpseudocode}
\usepackage{color}


%
\ifCLASSINFOpdf
\else
\fi
%
%

\long\gdef\boxit#1{\vspace{5mm}\begingroup\vbox{\hrule\hbox{\vrule\kern3pt
\vbox{\kern3pt#1\kern3pt}\kern3pt\vrule}\hrule}\endgroup}

\def\twocolfigure#1#2#3#4#5{
\begin{figure}[htb]
\centerline{\psfig{height=#1pt,width=#2pt,angle=0,figure=./figs/#3.eps}}
\caption{#4} \label{#5}
\end{figure}
}

\newtheorem{deff}{Definition}
\newtheorem{lem}{Lemma}
\newtheorem{thm}{Theorem}

\def\qed{\quad \vrule height5pt width5pt depth0pt}

\def\BibTeX{{\rm B\kern-.05em{\sc i\kern-.025em b}\kern-.08em

    T\kern-.1667em\lower.7ex\hbox{E}\kern-.125emX}}

\hyphenation{op-tical net-works semi-conduc-tor}

\begin{document}

\title{Optimal Distributed Resource Allocation for Decode-and-Forward Relay Networks}
\author{Yin Sun, 
        Zhoujia Mao, 
        Xiaofeng Zhong, 
        Yuanzhang Xiao, 
        Shidong~Zhou, 
        and Ness B. Shroff.

\thanks{Y. Sun and Z. Mao are with the Department of Electrical and Computer Engineering, the Ohio State University, 2015 Neil Ave., Columbus, OH 43210, USA. sunyin02@gmail.com, maoz@ece.osu.edu.}
\thanks{X. Zhong and S. Zhou are with the State Key Laboratory
on Microwave and Digital Communications, Tsinghua National
Laboratory for Information Science and Technology, and Department of
Electronic Engineering, Tsinghua University, Beijing, China.
Address: Room 4-407, FIT Building, Tsinghua University, Beijing
100084, People's Republic of China. \{zhongxf,zhousd\}@tsinghua.edu.cn.}
\thanks{Y. Xiao is with the Department of Electrical Engineering, University of California, Los Angeles. e-mail: xyz.xiao@gmail.com.}
\thanks{N. B. Shroff is with the Department of Electrical and Computer Engineering and the Department of Computer Science and Engineering, the Ohio State
University, 2015 Neil Ave., Columbus, OH 43210, USA. shroff@ece.osu.edu.}
\thanks{The material in this paper was presented in part in IEEE GLOBECOM 2009.}}
\maketitle
\vspace{-1.5cm}
\begin{abstract}
This paper presents a distributed resource allocation algorithm
to jointly optimize the power allocation, channel allocation and relay selection for decode-and-forward (DF) relay networks with a large number of sources, relays, and destinations. The well-known dual decomposition technique cannot directly be applied to resolve this problem, because the achievable data rate of DF relaying is not strictly concave, and thus the local resource allocation subproblem may have non-unique solutions. We resolve this non-strict concavity problem by using the idea of the proximal point method, which adds quadratic terms to make the objective function strictly concave. However, the proximal solution adds an extra layer of iterations over typical duality based approaches, which can significantly slow down the speed of convergence. To address this key weakness, we devise a fast algorithm without the need for this additional layer of iterations, which converges to the optimal solution. Our algorithm only needs local information exchange, and can easily adapt to variations of network size and topology. We prove that our distributed resource allocation algorithm converges to the optimal solution. A channel resource adjustment method is further developed to provide more channel resources to the bottleneck links and realize traffic load balance. Numerical results are provided to illustrate the benefits of our algorithm.
\\
\noindent {\bfseries Index terms}$-$ Decode-and-forward, distributed resource allocation, wireless relay network.
\end{abstract}


\section{Introduction}
\label{sec:intro}

Cooperative relaying has recently received a lot of attention as a promising technique to improve the throughput, coverage, and reliability of wireless networks \cite{Sendonaris2003,Laneman04}. The decode-and-forward (DF) relay strategy has been advocated by several standard organizations for next generation wireless networks \cite{Peters09,Loa2010,OFDM_relay_overview10}. In this strategy, a source node, a relay node, and a destination node cooperate to form a DF relay link. The relay node decodes the source node's transmission message, then forwards the recovered message to the destination.


Since power and channel (code, time, and/or frequency) are crucial resources in wireless networks, a number of studies have investigated the resource allocation for DF relay networks. They have shown that optimal resource allocation can achieve significant performance improvement for DF relay networks with a single source-destination data stream \cite{Madsen05,Zhao07,Liang07,Hong07,XiZhang_Jsac07,Yonghui07,
Ding08,Chen108,Dang10,Maric10,YinSunTSP11}. However, optimal resource allocation becomes much more challenging in scenarios with many sources, relays, and destinations, because each source-destination data stream may cooperate with several relay nodes and each relay node can assist several data streams, which could form a large number of potential DF relay links. In order to achieve higher network throughput, appropriate DF relay links should be selected out and allocated with suitable source power, relay power, and channel resources for transmissions.
%
%



\begin{figure}[!t]
    \centering
        \resizebox{0.35\textwidth}{!}{\includegraphics{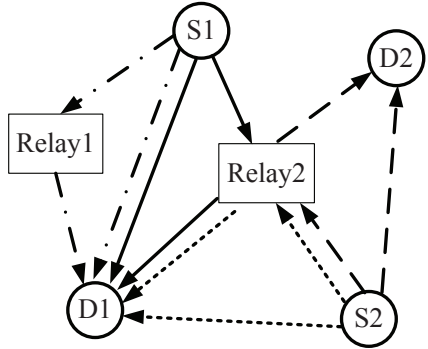}}
        \vspace{-0.0cm}
        \caption{A DF relay network with 3 source-destination data streams and 2 relays.}
        \label{fig1}
\end{figure}

Centralized power allocation and relay selection algorithms of relay networks were proposed in \cite{Guoqing06,WeiYuJsac07,Lingfan09,Jitvanichphaibool09}, which requests signaling mechanisms to gather the channel state information (CSI) of all the wireless links at a central control node. However, such mechanisms are difficult to implement in practice and will especially not work well when the network size is large.
Recently, a great deal of research efforts have focused on distributed resource allocation for DF relay networks: Traditional two-hop relaying was considered in \cite{YingCui09,Hongxing09,Gong11}, which ignores the source-destination wireless channel, and thus achieves a lower data rate. In \cite{Long08,Kadloor10,Hosseini10}, several other conservative rate functions of DF relaying were employed to simplify the distributed resource allocation problem and make it tractable. Optimal power allocation of the relay nodes was studied in \cite{Guoqing06,WeiYuJsac07,Serbetli08,Lingfan09,Jitvanichphaibool09} without considering power allocation of the source node and relay selection. Optimal distributed resource allocation for DF relay networks remains to be a difficult and crucial problem.

Dual decomposition techniques have been effectively used in multi-hop wireless networks to achieve optimal resource allocation results (e.g., see \cite{LinJsac06} and the references therein). 
However, such techniques cannot be applied directly to DF relay networks ---- the local resource allocation subproblems may have non-unique optimal solutions, because the achievable rate of DF relaying is not strictly concave. Since no global network information is available to the local resource allocation solvers, it is quite difficult to find a global feasible solution among all the locally optimal solutions \cite{Xiao04,Yu06}.

One promising method to address this non-strict concavity problem is the {proximal point method} \cite{Bertsekas89}, which adds strictly concave terms to the achievable rate function without affecting the optimal solution. However, typical proximal point algorithms require an extra outer layer of iterations compared to the conventional dual decomposition based algorithms \cite{Bertsekas89}, which, in turn, results in a slower convergence speed.

In this paper, we investigate the optimal resource allocation of DF relay networks, which may have a large number of sources, relays, and destinations. Each source node may transmit to one or several destinations through the assistance of several relay nodes. Meanwhile, each relay node may aid several source-destination data streams, as illustrated in Fig. \ref{fig1}. Each of the source and relay nodes has an individual transmission power constraint. The \emph{channel resources (code, time, and/or frequency)} of the network are managed by many distributed control nodes. Each control node is pre-assigned with some channel resources, and is responsible to allocate these channel resources to nearby wireless links. The main contributions of this paper are summarized as follows:

\begin{itemize}
\item We propose a distributed resource allocation algorithm that jointly optimizes the power allocation, channel allocation and relay selection of DF relay networks so as to maximize the total throughput. The candidate DF relay links compete for the channel resources and the DF relay links with zero (or very few) channel resources are not selected for transmission. By this, the optimal relay selection is obtained through channel allocation. Our algorithm requires less layers of iterations than traditional proximal point algorithms, and thus have a much faster convergence speed. In addition, our algorithm only needs local information exchange among the source, relay, destination, and control nodes of each DF relay link, and can easily adapt to variations of network size and topology. We prove that this algorithm converges to the optimal resource allocation solution.

\item In practice, the spatial distribution of wireless traffic is
usually non-uniform and varies from time to time. The
pre-assigned channel resources of the distributed control
nodes may be inadequate to support the heavy wireless
traffic. To address this problem, we develop a centralized
channel resource adjustment algorithm on top of
the proposed distributive resource allocation algorithm.
The bottleneck control nodes are provided with more
channel resources in order to balance the traffic load.
\end{itemize}

The proposed power allocation algorithm is motivated by the work in \cite{Lin06}, where a modified proximal point algorithm with less iteration layers was proposed for multi-path routing problems. However, the structure of the objective function in that work is very different from our work that deals with DF relay networks. Hence, a substantially new proof methodology is required to show convergence in our context, which is one of the major contributions of this paper.

The remaining parts of this paper are organized as follows: In
Section \ref{sec:model}, we present the system model and the
formulation of power allocation problem. In Section \ref{sec3}, we describe our distributed power allocation algorithm. A centralized channel resource adjustment algorithm is presented in Section \ref{sec4}. Some simple extensions of our algorithm are discussed in Section \ref{sec6}. Numerical results are provided in Section \ref{sec5}, and we conclude the paper in Section \ref{conclusion}.

\section{System Model and Problem Formulation}

\label{sec:model}
Consider a DF relay network with $N$ source/destination nodes, denoted by the set $\mathcal{N}=\{1,2,\ldots,N\}$, and $J$ relay nodes, represented by the set $\mathcal{J}=\{1,2,\ldots,J\}$. Each source-destination data stream in the network is denoted as $m=(s,d)$ with $s,d\in\mathcal{N}$. The set of all data streams is denoted by $\mathcal{M}\subseteq \{(i,j)| i,j\in \mathcal{N}, i\neq j\}$.
The $m$th data stream either can transmit directly through the source-destination wireless channel, or can be assisted by $J(m)$ candidate relay nodes and form $J(m)$ possible DF relay links. The set of candidate relay nodes for the $m$th data stream is denoted by $\mathcal{J}(m) \subseteq \mathcal{J}$.
We assume that each DF relay link only involves one relay node. When several relay nodes assist the same source-destination data stream, they belong to different DF relay links to avoid the implementation complexity due to the cooperation among different relay nodes. We further assume that the direct transmission (DT) link and DF relay links of each data stream operate over orthogonal wireless channels by means of code division multiple access (CDMA), or frequency/time division multiple access (F/TDMA) as in \cite{Serbetli08,Lingfan09}.


The wireless transmissions in this network are managed by $T$ distributed control nodes, which are denoted by $\mathcal{T}=\{1,2,\ldots,T\}$. The candidate wireless links of data stream $m$, including $J(m)$ DF relay links and one direct transmission (DT) link, are managed by the control node $c(m)$. We assume that neighboring control nodes are assigned to orthogonal channels to suppress co-channel interference, and distant control nodes are allowed to reuse the same channel resources. Such an assumption is practical for many wireless networks. For example, in wireless sensor networks, the traffic load and the transmission power are very small and hence interference is less of an issue \cite{cooperative_sensor10,Kim2010}.

The DF relay procedure consists of two phases: In Phase 1, the source node transmits a message to the relay and destination nodes. The relay node decodes its received message, while the destination stores its received signal for later decoding. In Phase 2, the relay node forwards the recovered message to the destination. The destination combines its received signals in two phases to decode the source node's message \cite{Laneman04}.
Let $h_m^{s,d}$ denotes the complex channel coefficient of the source-destination wireless link of data stream $m$, $h_{mj}^{s,r}$ and $h_{mj}^{r,d}$ denote the complex channel coefficients of the source-relay and relay-destination links of the DF relay link composed by data stream $m$ and relay node $j$.

The spectrum efficiency of the DT link of data stream $m$ is given by the capacity of Gaussian channel, i.e.,
\begin{eqnarray}
\label{eq38}
R_m^{DT}\!=\!\theta_{m}^{DT}\!\log_2\!\left(\!1\!+\!\frac{P^s_{m}|h^{s,d}_{m}|^2}{\theta_{m}^{DT}N_0W}\!\right)
\!=\!\theta_{m}^{DT}\!\log_2\!\left(\!1\!+\!\frac{P^s_{m}g^{s,d}_{m}}{\theta_{m}^{DT}}\!\right)\!\!,\!\!\!
\end{eqnarray}
where $P^s_{m}\geq0$ is transmission power of the source node, $\theta_m^{DT}$ is the corresponding proportion of channel resources, $W$ is the total amount of available channel resources, $N_0$ is power spectral density of the Gaussian noise at each receiver,
and $g_m^{s,d}\triangleq \frac{|h_m^{s,d}|^2}{N_0 W}$ characterizes the quality of the source-destination wireless channel of the $m$th data stream, as shown in Fig. \ref{1fig2}. The spectrum efficiency achieved by the DF relay link composed by data stream $m$ and relay node $j$ can be described as \cite{Laneman04}:
\begin{eqnarray}\label{eq1}
R_{mj}^{DF}\!\!\!\!\!\!\!\!\!\!
&&=\frac{\theta_{mj}^{DF}}{2}\min\left\{
\log_2\left(1+\frac{2P^s_{mj}|h^{s,r}_{mj}|^2}{\theta_{mj}^{DF}N_0W}\right),\right.\nonumber\\
&&~~~~~~~~~\left.
\log_2\left[1+\frac{2\big(P^s_{mj}|h^{s,d}_{m}|^2+P^r_{mj}|h^{r,d}_{mj}|^2\big)}{\theta_{mj}^{DF}N_0W}\right]
\right\}\nonumber\\
&&=\frac{\theta_{mj}^{DF}}{2}\min\left\{
\log_2\left(1+\frac{2P^s_{mj}g^{s,r}_{mj}}{\theta_{mj}^{DF}}\right),\right.\nonumber\\
&&~~~~~~~~~\left.\log_2\left[1+\frac{2\big(P^s_{mj}g^{s,d}_{m}+P^r_{mj}g^{r,d}_{mj}\big)}{\theta_{mj}^{DF}}\right]
\right\},
\end{eqnarray}
where $P^s_{mj},P^r_{mj}\geq0$ are the transmission powers of the source and relay nodes, $\theta_{mj}^{DF}$ is the corresponding proportion of channel resources, $g_{mj}^{s,r}\triangleq \frac{|h_{mj}^{s,r}|^2}{N_0W}$ and
$g_{mj}^{r,d}\triangleq \frac{|h_{mj}^{r,d}|^2}{N_0W}$ characterize the quality of the source-relay and relay-destination wireless channels of this DF relay link.

\begin{figure}[!t]
    \centering
        \resizebox{0.35\textwidth}{!}{\includegraphics{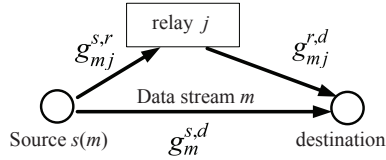}}
        \vspace{-0.3cm}
        \caption{Illustration of the DF relay strategy.}
        \label{1fig2}
        \vspace{-0.5cm}
\end{figure}

Let $s(m)$ represent the source node of data stream $m$. Then, the power constraint of source node $l$ over all the channels can be determined as
\begin{align*}
\sum_{\{m|s(m)=l\}}\left(P^s_{m}+\sum_{j\in\mathcal
J(m)} P^s_{mj}\right)\leq P^s_{l,\max},
\end{align*}
\noindent where $P^s_{l,\max}$ is the maximal transmission power of source node $l$. The power constraint of relay node $j$ is expressed as
\begin{align*}
\sum_{\{m|j\in\mathcal J(m)\}}P^r_{mj} \leq P^r_{j,\max},
\end{align*}
\noindent where $P^r_{j,\max}$ is the maximal transmission power of relay node $j$.
The local channel resource constraint managed by control node $t$
is given by
\begin{align*}
\sum_{\{m|c(m)=t\}}\left(\theta_{m}^{DT}+\sum_{j\in\mathcal
J(m)}\theta_{mj}^{DF}\right) \leq \beta_t,
\end{align*}
\noindent where $\beta_t$ represent the proportion of
channel resources pre-assigned to control node $t$.

In practice, the channel codes with short block length usually have quite poor error performance \cite{BK:Tse05}. The code block length and corresponding occupied channel resources should be large enough to guarantee a small decoding error probability, and the links with very few channel resources are not admitted for transmission. This requirement can be expressed by the following channel resource constraint:
\begin{align}\label{eq2}
\theta_{m}^{DT},\theta_{mj}^{DF}\geq\theta_{\min},
\end{align}
where $\theta_{\min}>0$ is a very small constant.
If the channel resource fraction of a DF relay link or a DT link is equal to $\theta_{\min}$, the network will not admit the transmission of this link.
In such a way, the optimal relay selection is fulfilled. From the perspective of optimization, setting the minimum value $\theta_{\min}$ can avoid the channel resources from approaching zero, and thus preventing unnecessary technical complications due to non-differentiable achievable rates $R_m^{DT}$ and $R_{mj}^{DF}$.


The joint design of power allocation, channel allocation, and relay selection of the DF relay network is formulated as
\begin{subequations}
\begin{align}
\!\!\!\!(\textrm{P}) \max
\limits_{\substack{P^s_{m},P^s_{mj},P^r_{mj},\theta_{m}^{DT},\theta_{mj}^{DF}}}&\sum_{m=1
}^M\left(R_m^{DT}+\sum_{j\in\mathcal J(m)} R_{mj}^{DF}\right)\label{eq5}\\
\rm{s.t.}~~~~~&\!\!\!\!\!\!\!\!\sum_{\{m|s(m)=l\}}\left(P^s_{m}+\sum_{j\in\mathcal
J(m)} \!\!P^s_{mj}\right)\!\leq\!
P^s_{l,\max}, \forall~l \label{eq10}\!\!\\
&\!\!\!\!\!\!\!\!\sum_{\{m|j\in\mathcal J(m)\}}P^r_{mj} \leq
P^r_{j,\max}, \forall~j \label{eq11}\\
&\!\!\!\!\!\!\!\!\sum_{\{m|c(m)=t\}}\left(\theta_{m}^{DT}+\sum_{j\in\mathcal J(m)}\theta_{mj}^{DF}\right) \leq \beta_t,~ \forall~t \label{eq31}\\
&P^s_{m},P^s_{mj},P^r_{mj}\geq0,\forall~m,j,\\
&\theta_{m}^{DT},\theta_{mj}^{DF}\geq\theta_{\min},\forall~m,j.\label{}
\end{align}
\end{subequations}
Note that although all the candidate DT and DF relay links are included in the objective function of Problem $(P)$, only the wireless links with channel resources larger than $\theta_{\min}$ are admitted for transmission after solving Problem $(P)$.
If $g_{mj}^{s,r}\leq g_{m}^{s,d}$, one can simply show that $R_{mj}^{DF}<R_{m}^{DT}$. Hence, DF relaying cannot achieve a higher data rate than DT transmission in this case. Therefore, only the relay nodes satisfying $g_{mj}^{s,r}>g_{m}^{s,d}$ need to be considered in the candidate relay set $\mathcal{J}(m)$ for date stream $m$.

%
The achievable rates $R_m^{DT}$ and $R_{mj}^{DF}$ are both concave in their power and channel resource variables. Therefore, the resource allocation problem $(\textrm{P})$ is a convex optimization problem.
However, as we have mentioned earlier, the achievable rate of DF relaying $R_{mj}^{DF}$ is not strictly concave. Specifically, $R_{mj}^{DF}$ is linear in the transmission power variables $P^s_{mj}$ and $P^r_{mj}$ in two cases: When  $P^s_{mj}g^{s,r}_{mj}<P^s_{mj}g^{s,d}_{m}+P^r_{mj}g^{r,d}_{mj}$ holds, the achievable rate
$R_{mj}^{DF}$ in \eqref{eq1} does not vary with respect to $P^r_{mj}$. Moreover, if $P^s_{mj}g^{s,r}_{mj}>P^s_{mj}g^{s,d}_{m}+P^r_{mj}g^{r,d}_{mj}$ holds and the value of $P^s_{mj}g^{s,d}_{m}+P^r_{mj}g^{r,d}_{mj}$ is fixed, $R_{mj}^{DF}$ maintains the same value as $P^r_{mj}$ varies.

In dual decomposition based distributed optimization techniques, it is quite difficult to recover the optimal primal variables (i.e., the transmission power variables $P^s_{mj}$ and $P^r_{mj}$), if the objective function is non-strictly concave \cite{BK:Bertsekas,Xiao04,Yu06,Long08}. Although the dual variables converge to the optimal solution to the dual problem, the primal variables may oscillate forever and never result in a feasible solution \cite{Lin06}.
In the next section, we develop a distributed power allocation algorithm to address this non-strict concavity difficulty, and then prove its convergence to the optimal solution.

\section{Distributed Resource Allocation}
\label{sec3}
To circumvent this non-strict concavity difficulty, we use the idea of proximal point method \cite{Bertsekas89}, which is to add some quadratic terms and make the objective function strictly concave in the primal variables. However, a standard proximal point method is not effective because it relies on an extra outer layer of iterations, and hence results in a slow convergence speed. We will overcome this difficulty by developing a distributed resource allocation algorithm without increasing the number of iteration layers. The details are provided below.

\subsection{Distributed Resource Allocation Algorithm}
The original problem~$(\textrm{P})$ is rewritten as the following problem with some extra auxiliary variables:
\begin{subequations}
\label{eq7}
\begin{align}
&\!\!\!\!\!\!\!\!\!\!\!\!\!\!\!\!\!\!\!\!\!\!\!\!\!\!\!\max
\limits_{\substack{P^s_{m},P^s_{mj},P^r_{mj},\theta_{m}^{DT},\theta_{mj}^{DF}\\Q^{s}_{m},Q^{s}_{mj},Q^{r}_{mj}}}
~~\sum_{m=1
}^M\left[R_m^{DT}-\frac{c_{m}}{2}\left(P^s_{m}-Q^{s}_{m}\right)^2\right]\label{eq79}\nonumber\\
&~+\sum_{m=1
}^M\sum_{j\in\mathcal J(m)}\left[ R_{mj}^{DF}-\frac{c_{mj}}{2}\left(P^s_{mj}-Q^{s}_{mj}\right)^2\right.\nonumber\\
&~~~~~~~~~~~~~~~~~~~~~~~~~\left.-\frac{c_{mj}}{2}\left(P^r_{mj}-Q^{r}_{mj}\right)^2\right]\\
~~~~~~~\rm{s.t.}&\sum_{\{m|s(m)=l\}}\left(P^s_{m}+\sum_{j\in\mathcal
J(m)} P^s_{mj}\right)\!\leq\!
P^s_{l,\max}, \forall~l\!\! \label{eq33}\\
&\sum_{\{m|j\in\mathcal J(m)\}}P^r_{mj} \leq
P^r_{j,\max}, \forall~j \label{eq34}\\
&\sum_{\{m|c(m)=t\}}\left(\theta_{m}^{DT}+\sum_{j\in\mathcal J(m)}\theta_{mj}^{DF}\right) \leq \beta_t,~ \forall~t \label{eq15}\\
&~~~~P^s_{m},P^s_{mj},P^r_{mj}\geq0,\forall~m,j,\\
&~~~~\theta_{m}^{DT},\theta_{mj}^{DF}\geq\theta_{\min},\forall~m,j,\label{eq16}
\end{align}
\end{subequations}
where $c_{m},c_{mj}>0$ are algorithm parameters, $Q^{s}_{m}$, $Q^{s}_{mj}$, and $Q^{r}_{mj}$ are auxiliary variables corresponding to $P^s_{m}$, $P^s_{mj}$, and $P^r_{mj}$, respectively. It is easy to show that the optimal value of \eqref{eq79} coincides with that of \eqref{eq5} \cite{Bertsekas89}. In fact, let $\vec{P}^\star$ denote the maximizer of $(\textrm{P})$, then $\vec{P}=\vec{P}^\star$, $\vec{Q}=\vec{P}^\star$ maximizes \eqref{eq7}.
Moreover, problem \eqref{eq7} is strictly concave with respect to the transmission power variables $P^s_{m}$, $P^s_{mj}$, and $P^r_{mj}$. 
In the sequent, we solve problem \eqref{eq7} instead of the original problem $(\textrm{P})$.


Let $\mu_l$ and $\nu_{j}$ be the Lagrange multipliers
associated with the constraints in \eqref{eq33} and \eqref{eq34}, respectively. The partial Lagrangian of problem~\eqref{eq7} with respect to the power constraints \eqref{eq33} and \eqref{eq34} is given by
\begin{align}
\label{eq12}
&L\left(P^s_{m},P^s_{mj},P^r_{mj},Q^{s}_{m},Q^{s}_{mj},Q^{r}_{mj},\theta_{m}^{DT},\theta_{mj}^{DF};\mu_{l},\nu_{j}\right)\nonumber\\
=&\sum_{m=1
}^M\bigg\{\left[R_m^{DT}-\frac{c_{m}}{2}\left(P^s_{m}-Q^{s}_{m}\right)^2\right]\nonumber\\
&+\sum_{j\in\mathcal J(m)}\left[ R_{mj}^{DF}-\frac{c_{mj}}{2}\left(P^s_{mj}-Q^{s}_{mj}\right)^2\right.\nonumber\\
&\left.~~~~~~~~~~~~~~~~~~-\frac{c_{mj}}{2}\left(P^r_{mj}-Q^{r}_{mj}\right)^2\right]\bigg\}\nonumber\\
&-\sum_{l=1}^N
\mu_{l}\left[\sum_{\{m|s(m)=l\}}\left(P^s_{m}+\sum_{j\in\mathcal J(m)}P^s_{mj}-
P^s_{l,\max}\right)\right]\nonumber\\
&-\sum_{j=1}^J\nu_{j}\left(\sum_{\{m|j\in\mathcal J(m)\}} P^r_{mj} -
P^r_{j,\max}\right).
\end{align}

For convenience, we rearrange the above Lagrangian as
\begin{align}\label{eq50}
&L\left(\vec{P},\vec{Q},\vec{\theta};\vec{\nu}\right)\nonumber\\
=&R\left(\vec{P},\vec{\theta}\right)\!-\!\frac{1}{2}\left(\vec{P}\!-\!\vec{Q}\right)^TV
\left(\vec{P}\!-\!\vec{Q}\right)\!-\!\vec{\nu}^T\left(E\vec{P}\!-\!\vec{P}_{\max}\right),\!\!\!
\end{align}
\noindent where $R(\vec{P},\vec{\theta})$ is the objective function of problem $(\textrm{P})$, $\vec{\theta}$ is a
$M+\sum_{m=1}^MJ(m)$ dimensional vector representing the channel resources of the DT and DF relay links
$\theta_{m}^{DT}$ and $\theta_{mj}^{DF}$,
$\vec{P}$ is a $M+2\sum_{m=1}^MJ(m)$ dimensional vector representing the power allocation variables
$P^{s}_{m}$, $P^{s}_{mj}$, and $P^{r}_{mj}$,
$\vec{Q}$ is a $M+2\sum_{m=1}^MJ(m)$ dimensional vector representing the auxiliary variables
$Q^{s}_{m}$, $Q^{s}_{mj}$, and $Q^{r}_{mj}$, $\vec{P}_{\max}$ is a $N+J$ dimensional vector representing the maximal transmission power $P^s_{l,\max}$ and $P^r_{l,\max}$, $\vec{\nu}$ is a $N+J$ dimensional vector representing the
dual variables $\mu_{l}$ and $\nu_{j}$, and $E$ is a $(N+J)\times(M+2\sum_{m=1}^MJ(m))$ matrix representing the relationship between the transmitting power variables and corresponding source/relay nodes.

We now present our resource allocation algorithm and then describe its distributed implementation procedure in the next subsection.

\textbf{Algorithm}~$\mathcal{A}$:
\begin{enumerate}
\item The source and relay nodes select the power allocation policy for the $k$th iteration by
\begin{eqnarray}\label{eq14}
\vec{x}(k)=\arg\max
\limits_{\vec{P}\geq0}L\left(\vec{P},\vec{Q}(k),\vec{\theta};\vec{\nu}(k)\right).
\end{eqnarray}
The source and relay nodes update the dual variables $\vec{\nu}(k+1)$ according to the following equation:
\begin{eqnarray}\label{eq20}
\vec{\nu}(k+1)=\left\{\vec{\nu}(k)+A\left[E\vec{x}(k)-\vec{P}_{\max}\right]\right\}^+,
\end{eqnarray}
\noindent where $A$ is a $(N+J)\times(N+J)$ dimensional diagonal matrix with diagonal elements $\alpha_l$~$(l=1,2,\cdots, N+J)$ as the step-size of dual updates, and $(\cdot)^+ \triangleq \max\{\cdot,0\}$.

\item
The source and relay nodes update the auxiliary variable $\vec{Q}(k+1)$ by
\begin{eqnarray}
\label{eq23}
\vec{Q}(k\!+\!1)=\arg\max
\limits_{\vec{P}\geq0}L\left(\vec{P},\vec{Q}(k),\vec{\theta};\vec{\nu}(k\!+\!1)\right).
\end{eqnarray}

\item Periodically, after every $K$ iterations, the control node set the channel allocation by
\begin{eqnarray} \label{eq13}
\vec{\theta}\leftarrow\arg\max_{\vec{\theta}\in\mathcal {W}}
R\left(\vec{Q}(k+1),\vec{\theta}\right),
\end{eqnarray}
where $\mathcal{W}$ denotes the feasible set of channel resources described by \eqref{eq15} and \eqref{eq16}.
\end{enumerate}

In the traditional proximal point method \cite{Bertsekas89}, Step 2) is implemented only after many executions of Step 1), i.e. until the updates of \eqref{eq14} and \eqref{eq20} have converged, which results in a three-layer iterative algorithm. On the other hand, our Algorithm $\mathcal{A}$ only requires a two-layer iteration structure, and hence has a faster convergence speed.

%
%

\subsection{Distributed Implementation of Algorithm $\mathcal{A}$}
We proceed to show that each step of Algorithm $\mathcal{A}$ can be fulfilled in a distributed fashion, which only requires local information exchange among the source, relay, destination, and control nodes of each DF relay link.

First, the dual update \eqref{eq20} can be equivalently expressed as
\begin{eqnarray}
&&\!\!\!\!\!\!\!\!\!\!\!\mu_l(k\!+\!1) \!=\! \left[\mu_l(k)\!+\!a_l \left(P^s_{m}\!+\!\sum_{j\in\mathcal J(m)}P^s_{mj}\!-\!
P^s_{l,\max}\right)\right]^+\!\!,~~~~\\
&&\!\!\!\!\!\!\!\!\!\!\!\nu_j(k\!+\!1) \!=\! \left[\nu_j(k)\!+\!a_{N+j} \left(\sum_{j\in\mathcal J(m)}P^r_{mj}\!-\!
P^r_{j,\max}\right)\right]^+\!\!,
\end{eqnarray}
which can be carried out distributedly at each source and relay node.

In addition, the Lagrangian maximization problems in \eqref{eq14} and \eqref{eq23} can be decomposed into many independent local power allocation subproblems. Specifically, the terms of the Lagrangian $L$ in \eqref{eq12} can be reassembled as
\begin{align}
&L\left(\vec{P},\vec{Q},\vec{\theta};\vec{\nu}\right)\nonumber\\
=&\sum_{m=1}^M\bigg\{\left[R_m^{DT}-\frac{c_{m}}{2}\left(P^s_{m}-Q^{s}_{m}\right)^2-\mu_{s(m)}P^s_{m}\right]\nonumber\\
&+\sum_{j\in\mathcal J(m)}\!\!\left[R_{mj}^{DF}\!-\!\frac{c_{mj}}{2}\left(P^s_{mj}\!-\!Q^{s}_{mj}\right)^2\!-\!\frac{c_{mj}}{2}\left(P^r_{mj}\!-\!Q^{r}_{mj}\right)^2\right.\nonumber\\
&\left.-\mu_{s(m)}P^s_{mj}-\nu_{j}P^r_{mj}\right]\bigg\}+\sum_{l=1}^N\mu_{l}P^s_{l,\max}+\sum_{j=1}^J\nu_{j}P^r_{j,\max}.
\end{align}
Therefore, the Lagrangian maximization problem in \eqref{eq14} and \eqref{eq23} can be rewritten as
\begin{align}
&\max_{\vec{P}\geq0} L\left(\vec{P},\vec{Q},\vec{\theta};\vec{\nu}\right)  \nonumber\\ =&\sum_{m=1}^M\limits\!\!\left[ H_m\left(\theta_m^{DT}, Q^{s}_{m};\mu_{s(m)}\right)\!\!+\!\! \sum_{j\in\mathcal J(m)}\!\! I_{mj}\left(\theta_m^{DF},Q^{s}_{mj},Q^{r}_{mj};\mu_{s(m)},\nu_{j}\right)\right]\nonumber\\
&+\sum_{l=1}^N\mu_{l}P^s_{l,\max}+\sum_{j=1}^J\nu_{j}P^r_{j,\max},
\end{align}
where
\begin{align}
\label{eq17}
&H_m\left(\theta_m^{DT}, Q^{s}_{m};\mu_{s(m)}\right) \nonumber\\
= & \max_{\substack{P^s_{m}\geq0}}R_m^{DT}-\frac{c_{m}}{2}\left(P^s_{m}-Q^{s}_{m}\right)^2-\mu_{s(m)}P^s_{m},\\
\label{eq18}
&I_{mj}\left(\theta_m^{DF},Q^{s}_{mj},Q^{r}_{mj};\mu_{s(m)},\nu_{j}\right)\nonumber\\
=& \max_{\substack{P^s_{mj},P^r_{mj}\geq0}}R_{mj}^{DF}\!-\!\frac{c_{mj}}{2}\left(P^s_{mj}\!-\!Q^{s}_{mj}\right)^2
\nonumber\\
&-\!
\frac{c_{mj}}{2}\left(P^r_{mj}\!-\!Q^{r}_{mj}\right)^2\!-\!\mu_{s(m)}P^s_{mj}\!-\!\nu_{j}P^r_{mj},
\end{align}
are local power allocation subproblems for the DT link and DF relay link, respectively.
The closed-form solutions to \eqref{eq17} and \eqref{eq18} are provided in the following lemmas, where the subscripts are omitted for ease of notation:
\begin{lem}\label{lem4}
The optimal solution to \eqref{eq17} is
\begin{align}
\label{eq52}
P^{s}=f(2\theta^{DT},c,\mu,Q^s,g^{s,d},1),
\end{align}
\noindent where
\begin{align}\label{eq74}
&f(\theta,c,\mu,Q,g,v)\!\triangleq\!\frac{1}{2}\left(\frac{\theta}{\mu\ln2}\!-\!\frac{\theta}{g}\!+\!\sqrt{x^2\!+\!y}\!-\!x\right)^+,\\
&x =
\frac{\mu}{cv}-\frac{Q}{v}+\frac{\theta}{\mu\ln2}-\frac{\theta}{2g},\\
&y =\frac{2Q\theta}{\mu
v\ln2}+ \frac{\theta^2}{g\mu\ln2}-\frac{\theta^2}{\mu^2\ln2^2}.
\end{align}
\end{lem}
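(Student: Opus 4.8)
The plan is to treat \eqref{eq17} as a one-dimensional strictly concave maximization and obtain its maximizer in closed form by solving the first-order stationarity condition. First I would substitute $R_m^{DT}$ from \eqref{eq38} and observe that, writing $\theta = 2\theta^{DT}$ (which explains the leading factor $2$ in the first argument of $f$ and unifies this case with a single branch of the DF rate), the DT rate takes the form $\tfrac{\theta}{2}\log_2\!\big(1 + 2 P^s g^{s,d}/\theta\big) = \theta^{DT}\log_2(1 + P^s g^{s,d}/\theta^{DT})$. The objective to be maximized is then $\tfrac{\theta}{2}\log_2(1 + 2P^s g^{s,d}/\theta) - \tfrac{c}{2}(P^s - Q^s)^2 - \mu P^s$ over $P^s \ge 0$ (so that $v=1$ here). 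Since the logarithmic term is concave and the quadratic penalty is strictly concave, this objective is strictly concave, so it admits a unique maximizer and the KKT conditions are both necessary and sufficient.

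Next I would write the stationarity condition. Differentiating gives $\frac{g^{s,d}\theta}{(\theta + 2P^s g^{s,d})\ln 2} - c(P^s - Q^s) - \mu = 0$. Because the factor $\theta + 2P^s g^{s,d}$ is strictly positive on the domain where the logarithm is defined, I would multiply through by it to clear the denominator; this converts the stationarity condition into a quadratic equation $a (P^s)^2 + b P^s + d = 0$ whose coefficients $a = 2c\,g^{s,d}$, $b = c\theta + 2 g^{s,d}(\mu - c Q^s)$, and $d = \theta\big(\mu - cQ^s - g^{s,d}/\ln 2\big)$ I would read off.

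I would then solve this quadratic and argue for the correct root. Since clearing the positive factor is an equivalence only on the domain $P^s > -\theta/(2g^{s,d})$, any spurious root lying outside it must be discarded. Relating the sign of the objective's derivative to the sign of the quadratic shows that the maximizer is the \emph{larger} root $\tfrac{1}{2a}\big(-b + \sqrt{b^2 - 4ad}\big)$; evaluating the quadratic at the domain boundary, where it reduces to $-g^{s,d}\theta/\ln 2 < 0$, confirms that this root indeed lies in the admissible region while the smaller one does not. Finally, by strict concavity, if this root is negative the constrained maximizer over $P^s \ge 0$ is $P^s = 0$, which is exactly the projection $(\cdot)^+$ appearing in \eqref{eq74}. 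It then remains to recast $\tfrac{1}{2a}\big(-b + \sqrt{b^2 - 4ad}\big)$ in the stated form, matching the non-radical part $-b/(2a)$ against $\tfrac{1}{2}\big(\tfrac{\theta}{\mu\ln 2} - \tfrac{\theta}{g^{s,d}} - x\big)$ and the normalized discriminant $(b^2 - 4ad)/a^2$ against $x^2 + y$.

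The main obstacle I expect is this last algebraic identification: verifying $(b^2 - 4ad)/a^2 = x^2 + y$ requires completing the square on the discriminant and reconciling the cross terms with the specific definitions of $x$ and $y$, keeping careful track of the $\ln 2$ factors and the substitution $\theta = 2\theta^{DT}$. The conceptual steps, namely strict concavity, the quadratic reduction, and root selection, are routine; it is the bookkeeping needed to massage the quadratic-formula root into the compact expression for $f$ that is delicate, and it is precisely where the domain restriction on the logarithm and the constant $\ln 2$ must be handled with care.
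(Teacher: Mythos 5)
Your proposal is correct and takes essentially the same approach as the paper: the paper's proof likewise writes the KKT stationarity condition for \eqref{eq17}, notes that for $P^{s}>0$ it is equivalent to a quadratic equation whose positive (larger) root gives the maximizer, and sets $P^{s}=0$ otherwise, which is exactly your $(\cdot)^+$ projection. Your version is simply a more detailed account of the same argument (explicit coefficients, root selection via the sign of the quadratic at the domain boundary, and the discriminant identification $(b^2-4ad)/a^2=x^2+y$, all of which check out), whereas the paper compresses these steps into two lines.
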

The proof of Lemma \ref{lem4} is provided in Appendix~\ref{distributedpowerallocation}. Note that $\sqrt{x^2+y}-x$ tends to 0 as $c\rightarrow0$. At this limit, \eqref{eq52} reduces to the conventional water-filling solution.

\begin{lem}\label{lem5}
The optimal solution to \eqref{eq18} is provided for three separate cases:

\noindent\textbf{Case 1:} if $g^{r,d}P^r>(g^{s,r}-g^{s,d})P^s$, the optimal values of $P^s$ and $P^r$ are given by
\begin{equation}\label{eq53}
\left\{
\begin{array}{l}
P^{s}=f(\theta^{DF},c,\mu,Q^s,g^{s,r},1),\\
P^{r}=\left[-\nu_{}/c+Q^r\right]^+.
\end{array}\right.
\end{equation}
\textbf{Case 2:} $g^{r,d}P^r<(g^{s,r}-g^{s,d})P^s$. If $P^r$ in \eqref{eq63} satisfies $P^r\geq0$, the optimal values of $P^s$ and $P^r$ are given by
\begin{equation}\label{eq63}
\left\{
\begin{array}{l}
P^s = \frac{g^{r,d}(g^{r,d}\nu-g^{s,d}\mu)}{\left[\left(g^{s,d}\right)^2+\left(g^{r,d}\right)^2\right]c} +\frac{g^{r,d}(g^{s,d}Q^s-g^{r,d}Q^r)}{\left(g^{s,d}\right)^2+\left(g^{r,d}\right)^2}\\
~~~~~~~~~~~~~~~~~~~~~~~~~~~~~+\frac{g^{s,d}e}{\left(g^{s,d}\right)^2+\left(g^{r,d}\right)^2},\\
P^r = -\frac{g^{s,d}(g^{r,d}\nu-g^{s,d}\mu)}{\left[\left(g^{s,d}\right)^2+\left(g^{r,d}\right)^2\right]c}
-\frac{g^{s,d}(g^{s,d}Q^s-g^{r,d}Q^r)}{\left(g^{s,d}\right)^2+\left(g^{r,d}\right)^2}\\
~~~~~~~~~~~~~~~~~~~~~~~~~~~~~+\frac{g^{r,d}e}{\left(g^{s,d}\right)^2+\left(g^{r,d}\right)^2},
\end{array}\right.
\end{equation}
where $e$ is the value of $P^s_{}g^{s,d}_{}+P^r_{}g^{r,d}_{}$ given by
\begin{align}\label{eq73}
e=&f\left(\theta^{DF}\left[\left(g^{s,d}\right)^2+\left(g^{r,d}\right)^2\right],
c,g^{s,d}\mu+g^{r,d}\nu,\right.\nonumber\\
&~~~\left.g^{s,d}{Q}^s+g^{r,d}{Q}^r,\left(g^{s,d}\right)^2+\left(g^{r,d}\right)^2,1\right),
\end{align}
Otherwise, if $P^r$ in \eqref{eq63} is negative, the optimal values of $P^s$ and $P^r$ are given by
\begin{equation}\label{eq54}
\left\{
\begin{array}{l}
P^{s}=f(\theta^{DF},c,\mu,Q^s,g^{s,d},1),\\
P^{r}=0.
\end{array}\right.
\end{equation}
\textbf{Case 3:} if $g^{r,d}P^r=(g^{s,r}-g^{s,d})P^s$, the optimal values of $P^s$ and $P^r$ are given by
\begin{equation}\label{eq67}
\left\{
\begin{array}{l}
P^{s}=f\left(\theta^{DF},c,\mu+\frac{\nu(g^{s,r}-g^{s,d})}{g^{r,d}},
\right.\\
\left.~~~~Q^s+\frac{Q^r(g^{s,r}-g^{s,d})}{g^{r,d}},g^{s,r},1+\frac{(g^{s,r}-g^{s,d})^2}{\left(g^{r,d}\right)^2}\right),\\
P^{r}=\frac{P^{s}(g^{s,r}-g^{s,d})}{g^{r,d}}.\\
\end{array}\right.\!\!\!\!
\end{equation}
\end{lem}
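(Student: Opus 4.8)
The plan is to exploit the fact that \eqref{eq18} is a strictly concave maximization in $(P^s,P^r)$ over the box $P^s,P^r\ge0$: the DF rate $R_{mj}^{DF}$ in \eqref{eq1} is concave and the two quadratic proximal terms are strictly concave, so the maximizer is unique and the (sub)gradient optimality conditions are both necessary and sufficient. The only source of nonsmoothness is the inner $\min$ in \eqref{eq1}, whose two arguments are equal precisely on the ray $g^{r,d}P^r=(g^{s,r}-g^{s,d})P^s$. This ray splits the feasible box into the region where the source--relay term is the active (smaller) one, the region where the combined source/relay--destination term is active, and the ray itself; these are exactly the three cases of the lemma. First I would note that replacing the $\min$ by whichever branch is active over the whole box yields a concave surrogate dominating $I_{mj}$ that coincides with it on the corresponding region, so any surrogate-maximizer falling inside its own region is automatically the global optimizer. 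Since the objective is strictly concave, the optimizer lies in exactly one of the three sets, and it suffices to solve a smooth problem on each and retain the candidate consistent with its defining inequality.

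In Case 1 the active branch is $\log_2(1+2P^sg^{s,r}/\theta^{DF})$, which does not involve $P^r$, so the objective separates. The $P^s$ part is $\max_{P^s\ge0}\tfrac{\theta^{DF}}{2}\log_2(1+2P^sg^{s,r}/\theta^{DF})-\tfrac{c}{2}(P^s-Q^s)^2-\mu P^s$, which is exactly the one-dimensional problem solved by the function $f$ of Lemma \ref{lem4}, giving $P^s=f(\theta^{DF},c,\mu,Q^s,g^{s,r},1)$; the $P^r$ part is the projected quadratic $\max_{P^r\ge0}-\tfrac{c}{2}(P^r-Q^r)^2-\nu P^r$, whose maximizer is $[Q^r-\nu/c]^+$. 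This yields \eqref{eq53}.

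In Case 2 the active branch depends on $(P^s,P^r)$ only through $e\triangleq g^{s,d}P^s+g^{r,d}P^r$. I would introduce $e$ as a new variable and, for each fixed $e$, minimize the separable quadratic $\tfrac{c}{2}(P^s-Q^s)^2+\tfrac{c}{2}(P^r-Q^r)^2+\mu P^s+\nu P^r$ subject to the single linear constraint $g^{s,d}P^s+g^{r,d}P^r=e$. This linearly constrained quadratic has a closed form that is affine in $e$, and substituting it back reduces \eqref{eq18} to a one-dimensional concave problem in $e$ of precisely the $f$-form, now with effective gain $(g^{s,d})^2+(g^{r,d})^2$, aggregated price $g^{s,d}\mu+g^{r,d}\nu$, and center $g^{s,d}Q^s+g^{r,d}Q^r$; this produces $e$ as in \eqref{eq73} and then $(P^s,P^r)$ as the affine images in \eqref{eq63}. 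If the resulting $P^r$ is negative, the constraint $P^r\ge0$ is active, so I would fix $P^r=0$ and re-optimize over $P^s$ alone with gain $g^{s,d}$, recovering \eqref{eq54}.

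In Case 3 the optimizer lies on the ray, so I would substitute $P^r=\lambda P^s$ with $\lambda=(g^{s,r}-g^{s,d})/g^{r,d}$ and fold the two proximal terms into one: the combined penalty becomes quadratic in $P^s$ with curvature $c(1+\lambda^2)$ and center $Q^s+\lambda Q^r$, while the linear prices aggregate to $\mu+\lambda\nu$. The remaining one-dimensional problem is again of the $f$-form, where the weight argument $v=1+\lambda^2$ absorbs the extra curvature, giving \eqref{eq67}. I expect the main obstacles to be the Case~2 elimination of $e$---verifying that the back-substituted one-dimensional problem matches the $f$-form with the stated transformed arguments, and correctly detecting the $P^r=0$ boundary---together with justifying the trichotomy rigorously at the nonsmooth ray in Case~3, where the ordinary gradient must be replaced by a subgradient of the $\min$; here strict concavity of the objective is what guarantees that exactly one of the three candidates is self-consistent and hence globally optimal.
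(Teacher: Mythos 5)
Your proposal is correct and lands on exactly the three smooth subproblems that the paper solves, but it reaches them by a genuinely different route. The paper introduces an epigraph variable $t$ with constraints $t\le R_1$, $t\le R_2$ (problem \eqref{eq32}), attaches multipliers $\tau,\zeta$ satisfying $\tau+\zeta=1$, and obtains the trichotomy from complementary slackness \eqref{eq35}; it then solves the stationarity conditions of $\tau R_1+(1-\tau)R_2$ minus the quadratic and linear terms in each case, using the relation \eqref{eq44} to eliminate a variable in Case~2 and the ray condition \eqref{eq307} to eliminate $\tau$ in Case~3. You instead work directly with the $\min$: your surrogate-domination argument (a branch maximizer landing in its own region certifies global optimality) replaces the epigraph/KKT machinery; your Case~2 partial minimization over the line $g^{s,d}P^s+g^{r,d}P^r=e$ reproduces the paper's \eqref{eq44} as the inner first-order condition, with the envelope theorem delivering the $f$-form equation \eqref{eq73} for $e$; and your ray substitution $P^r=\lambda P^s$ in Case~3 never introduces the subgradient weight $\tau$ at all. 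What the paper's route buys is a mechanical derivation in which ``which branch is active'' is answered by the multipliers; what yours buys is elementariness and an explicit per-case optimality certificate.

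Two cautions. First, your closing claim that ``exactly one of the three candidates is self-consistent'' requires a short argument you only gesture at: one must show that if the true optimizer lies in an \emph{open} region, the corresponding surrogate's maximizer cannot escape that region. This does follow from strict concavity (on the segment joining the true optimizer to a hypothetical escaped surrogate maximizer, the surrogate would have to be constant on a subinterval inside the region, which strict concavity forbids), but it should be stated, since the easy domination direction alone only proves the converse implication. Second, if you actually carry out your Case~2 algebra you will obtain, with $G^2\triangleq\left(g^{s,d}\right)^2+\left(g^{r,d}\right)^2$,
\begin{align*}
P^s=\frac{g^{r,d}\left(g^{s,d}\nu-g^{r,d}\mu\right)}{cG^2}
+\frac{g^{r,d}\left(g^{r,d}Q^s-g^{s,d}Q^r\right)}{G^2}
+\frac{g^{s,d}e}{G^2},
\end{align*}
which differs from the printed \eqref{eq63} by a swap of $g^{s,d}$ and $g^{r,d}$ inside the first two numerators. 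The paper's own substitution of \eqref{eq44} into \eqref{eq73} yields the same expression as yours (a numerical check with $g^{s,d}\neq g^{r,d}$ confirms it, and the two versions coincide only when $g^{s,d}=g^{r,d}$), so this is a typo in the lemma statement rather than a flaw in your plan; just do not treat matching the printed \eqref{eq63} verbatim as the success criterion.
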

The proof of Lemma \ref{lem5} is provided in Appendix~\ref{distributedpowerallocation1}.

Finally, the channel allocation problem \eqref{eq13} can be decomposed as $T$ independent local channel allocation subproblems for each control node. The $t$th control node need to solve the following subproblem:
\begin{subequations}
\label{eq19}
\begin{align}
\!\!\!\!\max
\limits_{\theta_{m}^{DT},\theta_{mj}^{DF}\geq\theta_{\min}}&\sum_{\{m|c(m)=t\}}\left(R_m^{DT}+\sum_{j\in\mathcal J(m)} R_{mj}^{DF}\right)\\
\textrm{s.t.}~~~~~~&   \sum_{\{m|c(m)=t\}}\left(\theta_{m}^{DT}+\sum_{j\in\mathcal
J(m)}\theta_{mj}^{DF}\right) \leq \beta_t,\\\label{eq59}
&\theta_{m}^{DT},\theta_{mj}^{DF}\geq\theta_{\min}.
\end{align}
\end{subequations}

%

%
Both $R^{DT}$ and $R^{DF}$ are strictly concave in their corresponding channel resource variables $\theta^{DT}$ and $\theta^{DF}$, respectively. Therefore, we can use standard Lagrangian duality techniques to solve subproblem \eqref{eq19}. Let $\omega_t$ be the Lagrange multiplier for constraint \eqref{eq59}, the optimal value of $\theta^{DT}$ is given by
\begin{equation}\label{eq48}
\theta^{DT} =\left\{\!\!
\begin{array}{l}
\textrm{the root $x$ of \eqref{eq55} with $a,b$ given by \eqref{eq40}},\\
\textrm{~~~~~~~~~~~~~~~~~~~~~~~~~~~~~~~~~~if $x>\theta_{\min}$};\\
\theta_{\min}~~~~~~~~~~~~~~~~~~~~~~~~~~~,~\textrm{otherwise},
\end{array}
\right.\!\!\!\!\!\!\!\!\!
\end{equation}
where the root $x$ is determined by
\begin{align}
\label{eq55}
\log_2\left(1+\frac{a}{x}\right)-\frac{\frac{a}{x}}{\ln2\left(1+\frac{a}{x}\right)}=b.
\end{align}
and $a,b$ are given by
\begin{align}\label{eq40}
a=g^{s,d}_{m}{P}^s_{m},~b=\omega_t.
\end{align}
The optimal value of $\theta^{DF}$ is given by
\begin{equation}
\theta^{DT} =\left\{\!\!
\begin{array}{l}
\textrm{the root $x$ of \eqref{eq55} with $a,b$ given by \eqref{eq41}},\\
\textrm{~~~~~~~~~~~~~~~~~~~~~~~~~~~~~~~~~~if $x>\theta_{\min}$};\\
\theta_{\min}~~~~~~~~~~~~~~~~~~~~~~~~~~~,~\textrm{otherwise},
\end{array}
\right.\!\!\!\!\!\!\!\!\!
\end{equation}
where $a,b$ are determined as
\begin{align}\label{eq41}
a\!=\!2\min\left\{g^{s,r}_{mj}{P}^s_{mj},g^{s,d}_{m}{P}^s_{mj}+g^{r,d}_{mj}{P}^r_{mj}\right\},~b=2\omega_t.
\end{align}
If $\theta^{DT}$ or $\theta^{DF}$ equals to $\theta_{\min}$, the corresponding communication link is not admitted for transmission.

Equation \eqref{eq55} can be solved by the Newton's method with guaranteed global convergence, which typically requires only 5-8 iterations. The optimal value of dual variable $\omega_t$ is obtained through bisection method to satisfy \eqref{eq59} with equality. The total computation complexity to solve subproblem \eqref{eq19} is quite low.

In each iteration, Algorithm $\mathcal{A}$ only requires an exchange of the local resource allocation solution, dual variables $\mu_l$ and $\nu_j$, and auxiliary variables among the source, relay, destination, and control nodes of each DF relay link. Hence, Algorithm $\mathcal{A}$ is insensitive to the variations in the network size and topology.

The key reason for introducing the quadratic terms is to ensure that the subproblem \eqref{eq18} has a unique solution. Suppose that $c_m=c_{mj}=0$, then problem \eqref{eq7} reduces to the original problem $(\textrm{P})$, and the subproblem \eqref{eq18} may have many optimal solutions. For example, if $\nu=0$, $\nu/c$ in the expression of $P^{r}$ in \eqref{eq53} can be an arbitrary positive number in $[0,Q^r]$, hence the relay power solutions are non-unique. Since no global network information is available when solving the local power allocation subproblem \eqref{eq18}, it is quite difficult to find a global feasible solution among all the local optimal solutions. As a result, the power allocation variable will keep oscillating, even though the dual variable converges to the optimal solution. However, as will be shown in Theorem \ref{thm1}, we are able to overcome this convergence issue by using Algorithm $\mathcal{A}$.
\subsection{Convergence Analysis of Algorithm $\mathcal{A}$}
\begin{deff}
We say a point $(\vec{Q}^*,\vec{\nu}^*)$ is a \emph{stationary point} for given $\vec{\theta}$, if
\begin{align}
&\vec{Q}^*=\arg\max\limits_{\vec{x}\geq0}L\left(\vec{x},\vec{Q}^*;\vec{\nu}^*\right),\label{eq36}\\
&E\vec{Q}^*-\vec{P}_{\max}\leq0, ~\vec{\nu}^*\geq0,\label{eq72}\\
&\vec{\nu}^{*}\otimes\left(E\vec{Q}^*-\vec{P}_{\max}\right)=0,
\label{eq71}
\end{align}
where $\vec{x}\otimes \vec{y}$ represents the Hadamard (elementwise) product of two vecters $\vec{x}$ and $\vec{y}$ with the same dimension.
\end{deff}

The convergence and optimality of Algorithm $\mathcal{A}$ is established in the following theorem:
\begin{thm}
\label{thm1}
If $c_{m}$ and $c_{mj}$ are small enough, and the dual step-size $\alpha_l$ satisfies
\begin{align}
\max_l{\alpha_l}\leq\frac{1}{2 {S}}\min_{m,j}\{c_m,c_{mj}\},
\end{align}
where $S$ is the maximal number of links that a source or relay node can participate, given by
\begin{align}
S=&\max\!\left\{\!\max_l\!\left[\sum_{\{m|s(m)=l\}}\!\!\!\!(1+J(m))\right],\max_j\!\left[\!\sum_{\{m|j\in\mathcal J(m)\}}\!\!\!\!1\right]\!\right\}\!.
\end{align}
Then when the update period of the channel resources, i.e., $K$, is sufficiently large,
our proposed {Algorithm}~$\mathcal{A}$ converges to an optimal solution
to Problem $(\textrm{P})$.
\end{thm}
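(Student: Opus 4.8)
The plan is to exploit the two-layer structure of Algorithm $\mathcal{A}$ and split the analysis accordingly: an \emph{inner} stage in which the channel allocation $\vec\theta$ is frozen and the power/dual updates \eqref{eq14}--\eqref{eq23} are iterated, and an \emph{outer} stage in which $\vec\theta$ is refreshed every $K$ iterations through \eqref{eq13}. For the inner stage I would first record that the added quadratic terms make $\vec P\mapsto L(\vec P,\vec Q,\vec\theta;\vec\nu)$ strongly concave in $\vec P$ with modulus $c_{\min}\triangleq\min_{m,j}\{c_m,c_{mj}\}$ (the smallest eigenvalue of $V$ in \eqref{eq50}), so the proximal maximizer $\vec P^{\star}(\vec Q,\vec\nu)\triangleq\arg\max_{\vec P\geq0}L$ is single-valued and $\tfrac{\sqrt S}{c_{\min}}$-Lipschitz in $\vec\nu$. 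Both $\vec x(k)$ and $\vec Q(k+1)$ are values of this one map at the common center $\vec Q(k)$ but at the old and new duals $\vec\nu(k),\vec\nu(k+1)$, and a stationary point satisfying \eqref{eq36}--\eqref{eq71} is precisely a joint fixed point of this map and of the projected dual update \eqref{eq20}.

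The heart of the argument is a Lyapunov function for the inner iteration. I would take the weighted distance to a stationary point $(\vec Q^{*},\vec\nu^{*})$,
\[
\Phi(k)=\big\|\vec Q(k)-\vec Q^{*}\big\|_{V}^{2}+\big\|\vec\nu(k)-\vec\nu^{*}\big\|_{A^{-1}}^{2},
\]
and show $\Phi(k+1)\leq\Phi(k)$ with strict decrease away from stationarity. For the primal part I would use that $\vec Q(k+1)$ is a proximal step (in $\|\cdot\|_{V}$) on the concave map $R(\cdot,\vec\theta)-\vec\nu(k+1)^{T}(E\cdot-\vec P_{\max})$; firm non-expansiveness of the proximal operator yields a $-\|\vec Q(k+1)-\vec Q(k)\|_{V}^{2}$ descent plus a cross term in $E^{T}(\vec\nu^{*}-\vec\nu(k+1))$. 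For the dual part, expanding \eqref{eq20} in the $A^{-1}$ metric and using both non-expansiveness of $(\cdot)^{+}$ and the fixed-point identity $\vec\nu^{*}=[\vec\nu^{*}+A(E\vec Q^{*}-\vec P_{\max})]^{+}$ produces a cross term $2\langle\vec\nu(k)-\vec\nu^{*},E(\vec x(k)-\vec Q^{*})\rangle$ and a quadratic residual bounded by $\max_l\alpha_l\,\|E(\vec x(k)-\vec Q^{*})\|^{2}$. The structural constant $S$ enters through $\|E\vec v\|^{2}\leq S\|\vec v\|^{2}$ (each power variable sits in exactly one power constraint, and each constraint couples at most $S$ variables). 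After bridging $\vec x(k)$ with $\vec Q(k+1)$ and $\vec\nu(k)$ with $\vec\nu(k+1)$ via the Lipschitz property of the proximal map, the two cross terms cancel, and the step-size hypothesis $\max_l\alpha_l\leq c_{\min}/(2S)$ forces the leftover residual $S\max_l\alpha_l\|\cdot\|^{2}\leq\tfrac{c_{\min}}{2}\|\cdot\|^{2}$ to be dominated by the primal descent. This gives $\Phi(k+1)\leq\Phi(k)$, hence $\Phi$ converges, $\|\vec Q(k+1)-\vec Q(k)\|\to0$, and a standard limit-point argument identifies the limit as a stationary point.

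Next I would verify that a stationary point is the optimal power allocation for the fixed $\vec\theta$. Putting $\vec P=\vec Q=\vec Q^{*}$ in \eqref{eq36} annihilates the quadratic terms, so \eqref{eq36} collapses to $\vec Q^{*}=\arg\max_{\vec P\geq0}\{R(\vec P,\vec\theta)-\vec\nu^{*T}(E\vec P-\vec P_{\max})\}$; together with the feasibility \eqref{eq72} and complementary slackness \eqref{eq71}, these are exactly the KKT conditions of $(\textrm{P})$ restricted to the power block. Since $(\textrm{P})$ is convex, any such KKT point is a global maximizer of $R(\cdot,\vec\theta)$ over the feasible power set. The three closed-form cases of Lemma \ref{lem5} guarantee that the maximizer map remains well defined across the kink of the $\min$ in $R_{mj}^{DF}$, so the subgradient and variational-inequality manipulations above stay valid on the non-smooth boundary.

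Finally, for the outer stage, taking $K$ large enough means that by the time \eqref{eq13} is invoked the inner iteration has effectively reached the power optimum for the current $\vec\theta$; the channel update then solves \eqref{eq19}, which is strictly concave in $\vec\theta$ and hence uniquely solvable. The outer loop is therefore a two-block coordinate ascent on the jointly concave objective of $(\textrm{P})$, alternating uniquely-attained maximizations over the power and channel blocks; the objective is monotone non-decreasing and bounded above, and uniqueness of each block maximizer rules out cycling, so the iterates converge to a point meeting the full KKT system of $(\textrm{P})$, i.e. a global optimum. I expect the principal difficulty to be the inner-loop Lyapunov step: unlike a textbook three-layer proximal scheme, here the dual is advanced only once before the proximal center moves, so primal and dual errors are coupled within a single iteration and must be balanced simultaneously — it is precisely this coupling that the condition $\max_l\alpha_l\leq c_{\min}/(2S)$ is engineered to tame, and making the cross terms cancel cleanly is the delicate part of the calculation.
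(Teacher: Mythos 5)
Your overall architecture matches the paper's: freeze $\vec\theta$, prove the power/dual iteration converges to a stationary point which is optimal for the power block, then treat Step 3) as the second block of a coordinate ascent scheme (the paper invokes Proposition 2.7.1 of \cite{BK:Bertsekas} for exactly this). The gap is in your inner-loop Lyapunov step, and it sits precisely where you yourself flag the "delicate part." When you expand $\|\vec\nu(k+1)-\vec\nu^*\|_{A^{-1}}^2$ and the firmly-nonexpansive primal estimate, the residual cross terms you must control reduce, via the optimality systems \eqref{eq21}--\eqref{eq22}, to a \emph{mixed} term of the form
\begin{align*}
\Big[\nabla R\big(\vec{P}_1,\vec\theta\big)-\nabla R\big(\vec{Q}^*,\vec\theta\big)\Big]^T\big(\vec{P}_2-\vec{Q}^*\big),
\end{align*}
where $\vec{P}_1=\vec x(k)$ is the maximizer at the old dual $\vec\nu(k)$ and $\vec{P}_2=\vec Q(k+1)$ is the maximizer at the new dual $\vec\nu(k+1)$, both with prox center $\vec Q(k)$. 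Concavity (monotonicity of the gradient map) gives a sign only when the \emph{same} point appears in both slots; with $\vec{P}_1\neq\vec{P}_2$ this term can be positive, and no generic argument — firm nonexpansiveness, projection nonexpansiveness, or Lipschitzness of the prox map — bounds it, because $R$ is neither smooth (the $\min$ in $R_{mj}^{DF}$ creates kinks) nor strictly concave (it has linear directions in the relay power). Your claim that "the two cross terms cancel" is therefore unsubstantiated; this is exactly the content of the paper's Lemma \ref{lem2}, which asserts the mixed term is at most $\frac{1}{2}(\vec\nu_2-\vec\nu_1)^T EV^{-1}E^T(\vec\nu_2-\vec\nu_1)$, and whose proof is the long appendix case analysis over which branch of $R^{DF}$ each of $\vec{P}_1$, $\vec{P}_2$, $\vec{Q}^*$ lies in.

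The tell-tale symptom is that your argument never uses the hypothesis that $c_m,c_{mj}$ are \emph{small enough} — your strong-concavity modulus and the step-size bound $\max_l\alpha_l\leq c_{\min}/(2S)$ would hold for arbitrary positive $c$. But the smallness of $c$ is essential and appears explicitly in the theorem: the appendix derives concrete thresholds (the constants $C_1,\ldots,C_4$, which depend on the channel gains and maximal powers) below which $c$ must lie for the mixed-term bound to hold near the kinks of $R^{DF}$. An inner-loop proof that does not invoke this hypothesis cannot be complete for this objective function; to repair your argument you would need to either prove Lemma \ref{lem2} (or an equivalent structure-specific bound) or find a replacement for it, which is the actual technical contribution the paper adds on top of the machinery of \cite{Lin06}.
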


We note that Theorem \ref{thm1} only provides a sufficient condition for the convergence of Algorithm~$\mathcal{A}$.
One interesting future direction is to explore some weaker conditions to ensure the
convergence of Algorithm~$\mathcal{A}$.

The proof of Theorem \ref{thm1} is similar to that of Proposition 4 in \cite{Lin06}. However, since the form of the objective function in Problem $(P)$ is quite different from that of \cite{Lin06}, we require the following result to replace Lemma 3 in \cite{Lin06}:

\begin{lem}
\label{lem2}
For any given $\vec{Q}$ and $\vec{\theta}$, let $(\vec{Q}^*,\vec{\nu}^*)$ be a stationary point satisfying \eqref{eq36}-\eqref{eq71}, $(\vec{P}_1,\vec{\nu}_1)$ and
$(\vec{P}_2,\vec{\nu}_2)$ be the corresponding maximizers of the
Lagrangian \eqref{eq50}, i.e.,
$$\vec{P}_1=\arg\max\limits_{\vec{P}\geq0}L\left(\vec{P},\vec{Q},\vec{\theta};\vec{\nu}_1\right)
,~~\vec{P}_2=\arg\max\limits_{\vec{P}\geq0}L\left(\vec{P},\vec{Q},\vec{\theta};\vec{\nu}_2\right).$$ If
$c_{mj}$ is small enough, then
\begin{eqnarray}\label{eq78}
&&\left[\nabla  R\left(\vec{P}_1, \vec{\theta}\right)-\nabla  R\left(\vec{Q}^*, \vec{\theta}\right)\right]^T\left(\vec{P}_2-\vec{Q}^*\right)\nonumber\\
\leq\!\!\!\!\!\!\!\!&&\frac{1}{2}\left(\vec{\nu}_2-\vec{\nu}_1\right)^TEV^{-1}E^T\left(\vec{\nu}_2-\vec{\nu}_1\right),
\end{eqnarray}
where $\nabla R(\vec{P}, \vec{\theta})$ and
$\nabla R(\vec{Q}^*, \vec{\theta})$ are subgradients of $R$ with respect to $\vec{P}$ satisfying
\begin{align}
\label{eq21}
&\nabla  R\left(\vec{P}, \vec{\theta}\right)-E^T\vec{\nu}-V\left(\vec{P}-\vec{Q}\right)=0,\\
\label{eq22}
&\nabla  R\left(\vec{Q}^*, \vec{\theta}\right)-E^T\vec{\nu}^*=0.
\end{align}
\end{lem}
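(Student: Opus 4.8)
The plan is to work entirely from the first--order (subgradient) optimality conditions \eqref{eq21}--\eqref{eq22} together with the concavity of $R$, and to manufacture the right--hand side of \eqref{eq78} by a completion--of--squares against the proximal matrix $V$. Writing $\vec{s}_1=\nabla R(\vec{P}_1,\vec{\theta})$, $\vec{s}_2=\nabla R(\vec{P}_2,\vec{\theta})$ and $\vec{s}^*=\nabla R(\vec{Q}^*,\vec{\theta})$, conditions \eqref{eq21}--\eqref{eq22} read $\vec{s}_1=V(\vec{P}_1-\vec{Q})+E^T\vec{\nu}_1$, $\vec{s}_2=V(\vec{P}_2-\vec{Q})+E^T\vec{\nu}_2$ and $\vec{s}^*=E^T\vec{\nu}^*$. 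The first thing I would do is subtract the first two to eliminate the common (but unknown) proximal centre $\vec{Q}$, obtaining $\vec{s}_1-\vec{s}_2=V(\vec{P}_1-\vec{P}_2)-E^T(\vec{\nu}_2-\vec{\nu}_1)$, which already exhibits the two quantities $\vec{P}_1-\vec{P}_2$ and $\vec{\nu}_2-\vec{\nu}_1$ that must appear in the final bound.

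Next I would use the monotonicity of the subgradient of the concave function $R$. For the pair $(\vec{P}_2,\vec{Q}^*)$ this gives $(\vec{s}_2-\vec{s}^*)^T(\vec{P}_2-\vec{Q}^*)\le 0$; since this quantity is nonpositive, subtracting it from the left--hand side of \eqref{eq78} only increases it, and lets me replace $\vec{s}^*$ by $\vec{s}_2$ in the first factor:
\[
\bigl[\vec{s}_1-\vec{s}^*\bigr]^T(\vec{P}_2-\vec{Q}^*)\le\bigl[\vec{s}_1-\vec{s}_2\bigr]^T(\vec{P}_2-\vec{Q}^*).
\]
Substituting the expression for $\vec{s}_1-\vec{s}_2$ derived above turns the target into a quadratic form in $\vec{P}_1-\vec{P}_2$ and $\vec{\nu}_2-\vec{\nu}_1$. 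I would then invoke the strong concavity of the proximal Lagrangian in $\vec{P}$: comparing the optimality of $\vec{P}_1$ and $\vec{P}_2$ for the two dual vectors yields a co--coercivity estimate $(\vec{P}_1-\vec{P}_2)^TV(\vec{P}_1-\vec{P}_2)\le(\vec{\nu}_2-\vec{\nu}_1)^TE(\vec{P}_1-\vec{P}_2)$, which bounds $\|\vec{P}_1-\vec{P}_2\|_V$ by $\sqrt{(\vec{\nu}_2-\vec{\nu}_1)^TEV^{-1}E^T(\vec{\nu}_2-\vec{\nu}_1)}$. A Young/completion--of--squares step of the form $\vec{a}^T\vec{w}-\tfrac{1}{2}\vec{w}^TV\vec{w}\le\tfrac{1}{2}\vec{a}^TV^{-1}\vec{a}$ with $\vec{a}=E^T(\vec{\nu}_2-\vec{\nu}_1)$ is then expected to deliver exactly the factor $\tfrac{1}{2}(\vec{\nu}_2-\vec{\nu}_1)^TEV^{-1}E^T(\vec{\nu}_2-\vec{\nu}_1)$ on the right of \eqref{eq78}.

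The hard part will be the residual coupling with $\vec{P}_2-\vec{Q}^*$, which is not controlled by $\vec{P}_1-\vec{P}_2$ or $\vec{\nu}_2-\vec{\nu}_1$ and therefore cannot be discharged by the generic convex--analysis steps above. This is precisely where the special structure of the decode--and--forward rate enters: because $R^{DF}_{mj}$ is the minimum of two logarithms, it is only weakly (non--strictly) concave and is in fact flat in the relay power $P^r_{mj}$ over whole regions, so the components of $\vec{s}_1-\vec{s}_2$ along these flat directions must be shown not to contribute. I expect to handle this link by link, using the separability of $R$ and the explicit piecewise solution of the local subproblem in Lemma~\ref{lem5} (Cases~1--3) to identify, for each link, which regime $\vec{P}_1$, $\vec{P}_2$, and $\vec{Q}^*$ occupy; the smallness of $c_{mj}$ is what keeps the proximal perturbation from moving the maximizer across these regime boundaries, so that the active subgradient components agree and the troublesome term collapses. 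A secondary technical point I would address is the legitimacy of treating the quantities defined by \eqref{eq21}--\eqref{eq22} as genuine subgradients of $R$ at an active bound $\vec{P}\ge 0$ or at the kink of the $\min\{\cdot,\cdot\}$, which again I would resolve through the case analysis of Lemma~\ref{lem5}.
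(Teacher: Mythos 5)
Your opening reductions are individually valid inequalities, but the key one is fatal to the whole strategy rather than merely insufficient. Bounding the left side of \eqref{eq78} by $\left[\nabla R(\vec{P}_1,\vec{\theta})-\nabla R(\vec{P}_2,\vec{\theta})\right]^T(\vec{P}_2-\vec{Q}^*)$ discards the term $\left[\nabla R(\vec{P}_2,\vec{\theta})-\nabla R(\vec{Q}^*,\vec{\theta})\right]^T(\vec{P}_2-\vec{Q}^*)$, and that discarded term is precisely the negative contribution that makes \eqref{eq78} true; the reduced target is false in general. To see this you do not even need the DF structure: take a single data stream with no relays, so that $R=R^{DT}_m$ is smooth and strictly concave in the scalar $P^s_m$, let $\vec{Q}^*$ be the stationary point (which sits at $P^s_{\max}$ with $\nu^*=\partial R/\partial P$ there), and choose $\vec{\nu}_1,\vec{\nu}_2$ close to each other but far from $\vec{\nu}^*$. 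Then $\vec{P}_2-\vec{Q}^*$ is a fixed vector of order one, while $\nabla R(\vec{P}_1)-\nabla R(\vec{P}_2)$ is of order $\|\vec{\nu}_2-\vec{\nu}_1\|$, so your reduced left side scales \emph{linearly} in $\|\vec{\nu}_2-\vec{\nu}_1\|$ with a sign that can be positive, whereas the right side $\frac{1}{2}(\vec{\nu}_2-\vec{\nu}_1)^TEV^{-1}E^T(\vec{\nu}_2-\vec{\nu}_1)$ scales quadratically; for $\|\vec{\nu}_2-\vec{\nu}_1\|$ small compared with $c$ the reduced inequality fails, and this happens for every fixed $c$, so "taking $c$ small" cannot repair it. The original \eqref{eq78} survives in this example only because of the term you threw away, which is roughly $-|R''|\,\|\vec{P}_2-\vec{Q}^*\|^2$. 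Relatedly, your claim that substituting $\vec{s}_1-\vec{s}_2=V(\vec{P}_1-\vec{P}_2)-E^T(\vec{\nu}_2-\vec{\nu}_1)$ "turns the target into a quadratic form in $\vec{P}_1-\vec{P}_2$ and $\vec{\nu}_2-\vec{\nu}_1$" contradicts your own subsequent (correct) observation: the expression is bilinear against the third, uncontrolled vector $\vec{P}_2-\vec{Q}^*$. The co-coercivity and Young steps you state are true but are never actually used to close anything.

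The part you defer to "the special structure of the decode-and-forward rate" is not a residual technicality; it is essentially the entire content of the lemma, and the mechanism you ascribe to small $c_{mj}$ is not the one that works. The paper's proof first decomposes \eqref{eq78} link by link (the DT terms and the nonnegativity multipliers $L^s,L^r$ are handled by the arguments of \cite{Lin06}), and then, for each DF link, runs a three-level case analysis: the signs of $a_i^sb_i^s$ and $a_i^rb_i^r$ (concavity of $R^{DF}$ yields only the joint inequalities $a_i^sb_i^s+a_i^rb_i^r\le 0$, not componentwise signs, which is exactly where non-strict concavity bites), the signs of $a_1^sb_2^s$ and $a_1^rb_2^r$, and finally which branch of the $\min$ each of $\vec{P}_1,\vec{P}_2,\vec{Q}^*$ occupies. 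All eight branch combinations are enumerated; nothing prevents the maximizers from lying in different regimes, since $\vec{\nu}_1$ and $\vec{\nu}_2$ may be far apart, and several combinations are eliminated only by sign contradictions. Smallness of $c$ enters differently from what you propose: in the mixed-regime mini-cases the paper derives constant bounds ($X_1^s$, $X_2^s$, $X_1^r$, $\ldots$) on the feasible powers, shows the cross terms such as $a_1^sb_2^s-a_2^sb_1^s+a_1^rb_2^r-a_2^rb_1^r$ are bounded above by constants while $(a_2^r-a_1^r)^2$ is bounded \emph{below} by a positive constant (possible only because the points are in different regimes), and then chooses $c\le C_i$ so that the term $\frac{1}{4c}(a_2^r-a_1^r)^2$ dominates. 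Your proposal contains neither this enumeration nor any workable substitute for it, so the lemma remains unproved.
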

\begin{proof}
A sketch of the proof is as follows. Since the rate function $R_{mj}^{DF}$ in \eqref{eq1} has two different forms depending on whether $P^s_{mj}g^{s,r}_{mj}\geq P^s_{mj}g^{s,d}_{m}+P^r_{mj}g^{r,d}_{mj}$ or $P^s_{mj}g^{s,r}_{mj}>P^s_{mj}g^{s,d}_{m}+P^r_{mj}g^{r,d}_{mj}$, we need to prove
the inequality
\begin{eqnarray}
\!\!\!\!\!\!\!\!\!\!\!\!&&\Big[\nabla_{\vec{P}_{mj}}R_{mj}^{DF}(\vec{P}_{mj,1})-\nabla_{\vec{P}_{mj}}R_{mj}^{DF}(\vec{Q}_{mj}^*)\Big]^T\big(\vec{P}_{mj,2}-\vec{Q}_{mj}^*\big)\!\!\!\nonumber\\
\!\!\!\!\!\!\!\!\!\!\!\!&&\leq \frac{1}{2c_{mj}}\left[(\mu_{s(m),2}-\mu_{s(m),1})^2+(\nu_{j,2}-\nu_{j,1})^2\right], \forall m,j,
\end{eqnarray}
for 8 different cases corresponding to different values of $\vec{P}_{mj,1}$, $\vec{P}_{mj,2}$, and $\vec{Q}_{mj}^*$, where
$\vec{P}_{mj}=(P_{mj}^s,P_{mj}^r)$. For each case, we show that \eqref{eq78} holds, provided that $c_{mj}$ is small enough.
\ifreport
The detailed proof is very technical and is given in Appendix~\ref{proofoflemma}.
\else
The detailed proof is very technical and is relegated to \cite{Yin2011techreport}.
\fi
\end{proof}

With Lemma \ref{lem2}, we are ready to prove Theorem \ref{thm1}. For any given channel allocation $\vec{\theta}$, Proposition 4 in \cite{Lin06} guarantees that $\vec{Q}$ converges to the optimal power allocation solution in Step 1) and 2) of Algorithm~$\mathcal{A}$. The optimal channel allocation for given power allocation is obtained in Step 3) of Algorithm~$\mathcal{A}$. Therefore, Algorithm~$\mathcal{A}$ is a block coordinate optimization algorithm, which optimizes the power allocation and channel allocation iteratively. Then, Theorem \ref{thm1} follows from Proposition 2.7.1 in \cite{BK:Bertsekas}. The convergence with $K$ bounded away from infinity is empirically verified during our simulations.

\section{Centralized Channel Resource Adjustment}
\label{sec4}
In practice, the spatial distribution of wireless traffic is usually non-uniform and varies from time to time. The pre-assigned channel resources of the distributed control nodes may be inadequate to support the non-uniform traffic distribution.
In this section, we develop a centralized channel resource adjustment algorithm on top of the proposed distributed power allocation algorithm, so as to provide additional channel resources to the bottleneck control nodes and fulfill traffic load balance.

The overall algorithm addresses the following resource allocation problem, which has one extra constraint \eqref{eq49}:
%
\begin{subequations}\label{eq75}
\begin{align}
\!\!\!\!(\textrm{P1})\!\!\!\max
\limits_{\substack{P^s_{m},P^s_{mj},P^r_{mj},\\\theta_{m}^{DT},\theta_{mj}^{DF},\beta_t}}&\sum_{m=1
}^M\left(R_m^{DT}+\sum_{j\in\mathcal J(m)} R_{mj}^{DF}\right)\\
\rm{s.t.}~~~~&\!\!\!\!\!\!\!\!\sum_{\{m|s(m)=l\}}\!\!\left(P^s_{m}+\!\!\sum_{j\in\mathcal
J(m)} P^s_{mj}\right)\!\leq\!
P^s_{l,\max}, \forall l\!\!\!\! \\
&\!\!\!\!\!\!\!\!\sum_{\{m|j\in\mathcal J(m)\}}P^r_{mj} \leq
P^r_{j,\max}, \forall~j
\\
&
P^s_{m},P^s_{mj},P^r_{mj}\geq0,\forall~m,j\\
&\!\!\!\!\!\!\!\!\sum_{\{m|c(m)=t\}}\!\!\left(\theta_{m}^{DT}+\sum_{j\in\mathcal J(m)}\theta_{mj}^{DF}\right) \leq \beta_t,~ \forall t\\
&\theta_{m}^{DT},\theta_{mj}^{DF}\geq\theta_{\min},\forall~m,j.\\
&\vec{\beta}\in \mathscr{B}, \label{eq49}
\end{align}
\end{subequations}
where $\vec{\beta}$ is a $T$ dimensional vector representing the channel resource proportions $\beta_t$ pre-assigned to the control nodes and $\mathscr{B}$ is the feasible set containing all the possible choices of $\vec{\beta}$.
The feasible set $\mathscr{B}$ satisfies the following two principles:
\begin{enumerate}
\item Neighboring control nodes are assigned to orthogonal channels to suppress co-channel interference.
\item Distant control nodes are allowed to reuse the same channel.
\end{enumerate}
Under these principles, $\mathscr{B}$ can be expressed as a convex set bounded by linear constraints of $\vec{\beta}$, i.e., a polyhedron \cite{BK:BoydV04}. For instance, If control node $2$ is close to the control nodes $1$ and $3$, while control node $1$ are relative far from control node $3$, the channel resource constraint set $\mathscr{B}$ is given by
\begin{eqnarray}
\mathscr{B}=\left\{(\beta_1,\beta_2,\beta_3)
|\beta_1+\beta_2\leq 1,\beta_2+\beta_3\leq 1,\beta_1,\beta_2,\beta_3\geq 0\right\}\!.\!\!\!\!\!\!\!\!\!\!\!\!\!
\end{eqnarray}
Therefore, problem $(\textrm{P1})$ is a convex optimization problem.

Problem $(\textrm{P}1)$ can be reformulated as
\begin{subequations}
\label{eq69}
\begin{align}
\max
\limits_{\beta_t} &~R^\star(\vec{\beta})\\
\rm{s.t.}& ~\vec{\beta}\in \mathscr{B},
\end{align}
\end{subequations}
\noindent where $R^\star(\vec{\beta})$ is the optimal value of Problem $(\textrm{P})$.
Problem \eqref{eq69} is solved by using the subgradient method \cite{BK:Bertsekas,Subgradient_Boyd}. The subgradient $\frac{\partial R^\star(\vec{\beta})}{\partial \beta_t}$ is exactly the optimal dual variable $\omega_t^\star$ associated with the constraint~\eqref{eq15} \cite{BK:Bertsekas,Palomar06,Palomar07}, which can be obtained for free during the process of solving the subproblem \eqref{eq19}. The outer-layer subgradient algorithm operates on a much slower time scale, so that each subgradient step is performed after an approximate solution to Problem $(\textrm{P})$ has been derived. At the $q$th iteration, the subgradient method updates $\beta_t$ by \cite{Subgradient_Boyd}
\begin{align}
\label{eq70}
\beta_t(q+1)=\left[\beta_t(q)+\delta(q)\omega_t^\star\right]_{\mathscr{B}},
\end{align}
where $\delta(q)$ is the step-size of the $l$th subgradient update and $\left[x\right]_{\mathscr{B}}$ is the projection of $x$ onto the set $\mathscr{B}$.
If the step size $\delta(q)$ is chosen according to a diminishing step size rule, the subgradient algorithm converges to the optimal solution to \eqref{eq75} \cite{Subgradient_Boyd}. If the step size $\delta(q)$ is a constant value, the subgradient algorithm converges to a neighborhood of the optimal solution \cite{Subgradient_Boyd}. The convergence speed of the subgradient method can be improved if one further considers the acceleration techniques in
\cite{BK:Bertsekas,BK:Bazaraa06,BK:Shor85}. The overall resource allocation algorithm for solving problem $(\textrm{P}1)$ is summarized in Algorithm \ref{alg1}.
\begin{algorithm}
  \caption{The resource allocation algorithm for solving problem $(\textrm{P}1)$}\label{alg1}
\begin{algorithmic}[1]
  \State Set the iteration number $q= 1$; initialize the dual variable $\beta_t(1)$ for all $t\in\mathcal {T}$.
  \State Solve problem $(\textrm{P})$ by our proposed distributed power allocation Algorithm $\mathcal {A}$ and obtain $\omega_t^\star$.
  \State Update $\beta_t$ by \eqref{eq70} and return to Step 2, until $\vec{\beta}$ has converged.
\end{algorithmic}
\end{algorithm}

\begin{figure}[!t] \centering
    \resizebox{0.45\textwidth}{!}{
    \includegraphics{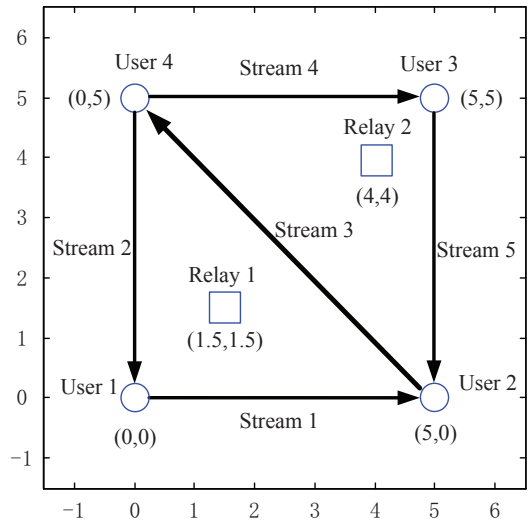}}
    \caption{The topology of the considered DF relay network.}
    \label{fig3}
\end{figure}

\section{Discussions}\label{sec6}
While the focus of this paper is on two-hop DF relay networks, our resource allocation solution can easily be extended to multi-hop DF relay networks, which employ single-hop point-to-point transmission and two-hop relaying as the basic wireless link between two nodes. Then, the cross-layer optimization of multi-hop DF relay networks, including congestion control, routing, and resource allocation, can be resolved by following the techniques in \cite{LinJsac06}. Since the congestion control and routing solutions are quite similar with those of traditional multi-hop wireless networks \cite{LinJsac06}, we focus on the more difficult resource allocation component in this paper.

\section{Simulation Results} \label{sec5}
This section provides some simulation results to examine the performance of our proposed resource allocation algorithm. We consider a DF relay network with 4 source/destination user nodes and 2 relay nodes. The topology of the network is illustrated in Fig. \ref{fig3}. There are 5 data streams in this network, which can transmit either through source-destination DT links or by the assistance of the 2 relay nodes. We assume that User 1 and the 2 relay nodes are 3 control nodes of the network. Data streams 1 and 2 are managed by User 1, data stream 3 is managed by Relay 1, and data streams 4 and 5 are managed by Relay 2. These control nodes are assigned with orthogonal channel resources, and the feasible set $\mathscr{B}$ is determined by
\begin{eqnarray}
\mathscr{B}=\left\{(\beta_1,\beta_2,\beta_3)| \beta_1\!+\!\beta_2\!+\!\beta_3\!\leq\! 1,\beta_1,\beta_2,\beta_3\geq 0\right\}.\!
\end{eqnarray}
The channel power gain between two nodes is determined by a large-scale path loss component with a path loss factor of 4. Each source and relay node has the same amount of transmission power, i.e. $P_{l,\max}^s = P_{j,\max}^r = P_{\max}$ for all $l,j$. The received signal-to-noise ratio (SNR) at unit distance from a transmitting node is $\frac{P_{\max}}{N_0W} =25$dB.

\begin{table}
\renewcommand{\arraystretch}{1.3}
\caption{Spectrum efficiency (bits/s/Hz) and relay selection of each data stream.}
  \centering
  \label{tab3}
    \begin{tabular}{c c c c c c}
      \hline\hline
         &Stream~1 & Stream~2 & Stream~3 & Stream~4 & Stream~5 \\ \hline
        with DF relaying & 0.72 & 0.41 & 0.10 & 0.05 & 0.63  \\
        without DF relaying & 0.43 & 0.22 & 0.11 & 0.22 & 0.43 \\
        \hline
        Relay selection & Relay~1 & Relay~1 & Relay~1~\&~2 & Relay~2 & Relay 2\\
      \hline\hline
    \end{tabular}
\end{table}

\begin{figure}[!t] \centering
    \resizebox{0.45\textwidth}{!}{
    \includegraphics{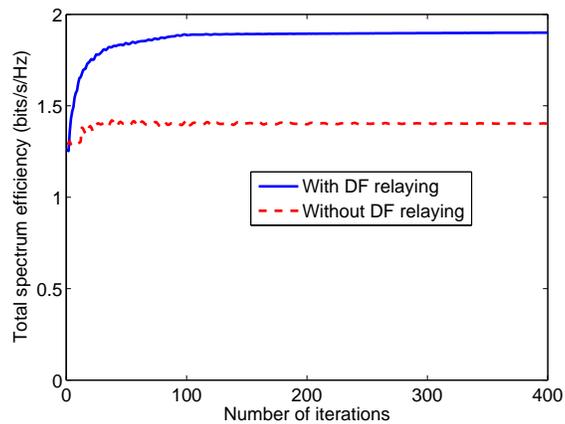}}
    \caption{The evolution of the total spectrum efficiency achieved by Algorithm \ref{fig1}.}
    \label{fig6}
\end{figure}

\begin{figure}[!t] \centering
    \resizebox{0.45\textwidth}{!}{
    \includegraphics{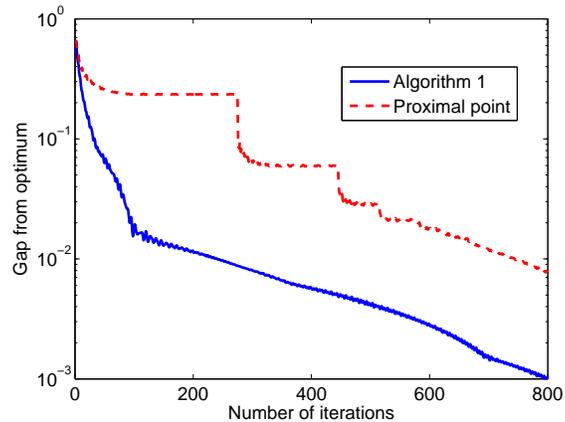}}
    \caption{The convergence performance of Algorithm \ref{fig1} and traditional proximal point method.}
    \label{fig7}
\end{figure}

\begin{figure}[!t] \centering
    \resizebox{0.45\textwidth}{!}{
    \includegraphics{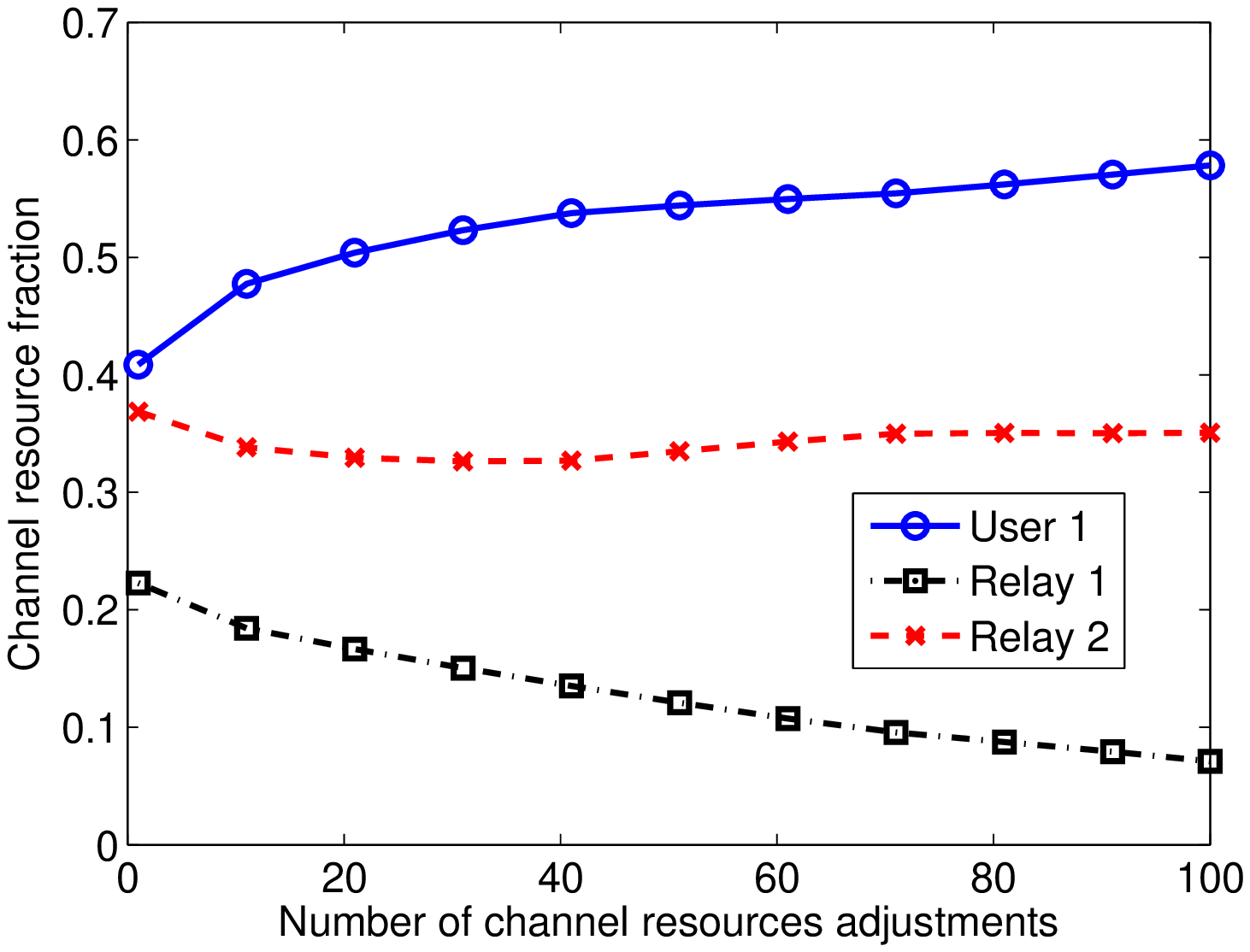}}
    \caption{The evolution of channel resource adjustment of the distributed control nodes with DF relaying.}
    \label{fig8}
\end{figure}

In Algorithm \ref{alg1}, the channel allocation $\vec{\theta}$ is computed according to \eqref{eq19} after every $K=2$ iterations, and the channel resources of the control nodes $\vec{\beta}$ is adjusted according to \eqref{eq70} after every $10$ iterations. The algorithm parameters are chosen as $c_m = c_{mj} = 10^{-4}$, $\alpha_l = 5\times10^{-5}$, $\delta(q) = 0.3$, and $\theta_{\min}=0.01$. Our numerical experience suggests that there is a wide range of the algorithm parameters $c_m$ and $c_{mj}$ that ensures the convergence and optimality of Algorithm \ref{alg1}.

The spectrum efficiency and relay selection of each data stream are provided in Tab. \ref{tab3}. Figure \ref{fig6} illustrates the evolution of our resource allocation Algorithm \ref{alg1}. The total spectrum efficiency of the network increases 35.5\% by deploying DF relay techniques. The convergence speeds of Algorithm \ref{alg1} and traditional proximal point algorithm \cite{Bertsekas89} are shown in Fig. \ref{fig7}.  In the proximal point algorithm, the auxiliary update of \eqref{eq23} is implemented only after the updates of \eqref{eq14} and \eqref{eq20} converge and satisfy the criteria $\|\vec{\nu}(k+1)-\vec{\nu}(k)\|_2\leq 0.01 \|\vec{\nu}(k)\|_2$. Since Algorithm \ref{alg1} has less layers of iterations than the proximal point algorithm, it has a much faster convergence speed.
We note that this is the convergence speed when the algorithm is cold started. In practice, since the channel condition varies slowly, the resource allocation solution from the previous run of the algorithm is an excellent initial state. By using the previous solution for warm-starting the algorithm, the algorithm converges much faster.

The evolution of the channel resource adjustment of the distributed control nodes, i.e., User 1, Relay 1, and Relay 2, is illustrated in Fig. \ref{fig8} for the case that DF relay technique is employed, and in Fig. \ref{fig9} for the case that the relay nodes only act as control nodes but not perform DF relaying. 

%

\begin{figure}[!t] \centering
    \resizebox{0.45\textwidth}{!}{
    \includegraphics{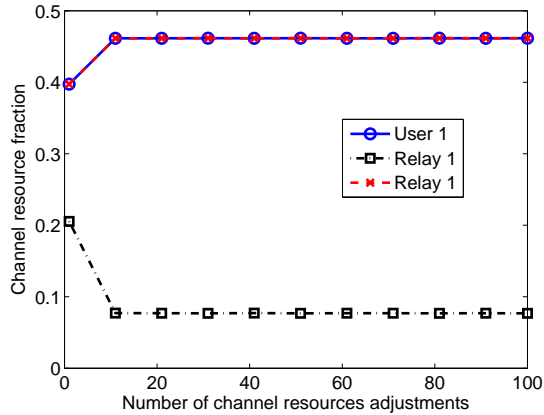}}
    \caption{The evolution of channel resource adjustment of the distributed control nodes without DF relaying.}
    \label{fig9}
\end{figure}

\section{Conclusion}\label{conclusion}
In this paper, we have developed a distributed resource allocation algorithm
to jointly optimize the power allocation, channel allocation and relay selection for DF relay networks with a large number of sources, relays, and destinations. Conventional dual decomposition techniques are not applicable, because the objective function is not strictly concave. We resolve this problem by adding some quadratic terms to make the objective function strictly concave. In this algorithm, information exchange only occurs locally among the source, relay, destination, and control nodes of each DF relay link. We have establish the convergence and optimality of our distributed resource allocation algorithm. A centralized channel
resource adjustment algorithm has been developed on top of the distributive resource allocation algorithm to achieve traffic load balance. Numerical results are provided to illustrate the benefits of our proposed algorithm.

\bibliographystyle{IEEEtran}
\bibliography{refs}

\begin{thebibliography}{10}
\providecommand{\url}[1]{#1}
\csname url@samestyle\endcsname
\providecommand{\newblock}{\relax}
\providecommand{\bibinfo}[2]{#2}
\providecommand{\BIBentrySTDinterwordspacing}{\spaceskip=0pt\relax}
\providecommand{\BIBentryALTinterwordstretchfactor}{4}
\providecommand{\BIBentryALTinterwordspacing}{\spaceskip=\fontdimen2\font plus
\BIBentryALTinterwordstretchfactor\fontdimen3\font minus
  \fontdimen4\font\relax}
\providecommand{\BIBforeignlanguage}[2]{{%
\expandafter\ifx\csname l@#1\endcsname\relax
\typeout{** WARNING: IEEEtran.bst: No hyphenation pattern has been}%
\typeout{** loaded for the language `#1'. Using the pattern for}%
\typeout{** the default language instead.}%
\else
\language=\csname l@#1\endcsname
\fi
#2}}
\providecommand{\BIBdecl}{\relax}
\BIBdecl

\bibitem{Sendonaris2003}
A.~Sendonaris, E.~Erkip, and B.~Aazhang, ``User cooperation diversity- {Part
  I}: System description; {Part II}: Implementation aspects and performance
  analysis,'' \emph{IEEE Trans. Commun.}, vol.~51, pp. 1927--1948, Nov. 2003.

\bibitem{Laneman04}
J.~Laneman, D.~Tse, and G.~Wornell, ``Cooperative diversity in wireless
  networks: Efficient protocols and outage behavior,'' \emph{IEEE Trans. Inf.
  Theory}, vol.~50, no.~12, pp. 3062--3080, Dec. 2004.

\bibitem{Peters09}
S.~Peters and R.~Heath, ``The future of {WiMAX}: Multihop relaying with {IEEE}
  802.16j,'' \emph{IEEE Commun. Mag.}, vol.~47, no.~1, pp. 104--111, 2009.

\bibitem{Loa2010}
K.~Loa, C.-C. Wu, S.-T. Sheu, Y.~Yuan, M.~Chion, D.~Huo, , and L.~Xu,
  ``{IMT-Advanced} relay standards,'' \emph{IEEE Commun. Mag.}, vol.~48, no.~8,
  pp. 40--48, Aug. 2010.

\bibitem{OFDM_relay_overview10}
M.~Salem, A.~Adinoyi, M.~Rahman, H.~Yanikomeroglu, D.~Falconer, Y.-D. Kim,
  E.~Kim, and Y.-C. Cheong, ``An overview of radio resource management in
  relay-enhanced {OFDMA}-based networks,'' \emph{IEEE Commun. Surveys Tuts.},
  vol.~12, no.~3, pp. 422--438, 2010.

\bibitem{Madsen05}
A.~H${\o}$st-Madsen and J.~Zhang, ``Capacity bounds and power allocation for
  wireless relay channels,'' \emph{IEEE Trans. Inf. Theory}, vol.~51, no.~6,
  pp. 2020--2040, Jun. 2005.

\bibitem{Zhao07}
Y.~Zhao, R.~Adve, and T.~Lim, ``Improving amplify-and-forward relay networks:
  optimal power allocation versus selection,'' \emph{IEEE Trans. Wireless
  Commun.}, vol.~6, no.~8, pp. 3114--3123, Aug. 2007.

\bibitem{Liang07}
Y.~Liang, V.~V. Veeravalli, and H.~V. Poor, ``Resource allocation for wireless
  fading relay channel: Max-min solution,'' \emph{IEEE Trans. Inf. Theory},
  vol.~53, no.~10, pp. 3432--3453, Oct. 2007.

\bibitem{Hong07}
Y.-W. Hong, W.-J. Huang, F.-H. Chiu, and C.-C. Kuo, ``Cooperative
  communications in resource-constrained wireless networks,'' \emph{IEEE Signal
  Process. Mag.}, vol.~24, no.~3, pp. 47--57, May 2007.

\bibitem{XiZhang_Jsac07}
J.~Tang and X.~Zhang, ``Cross-layer resource allocation over wireless relay
  networks for quality of service provisioning,'' \emph{IEEE J. Sel. Areas
  Commun.}, vol.~25, no.~5, pp. 645--656, May 2007.

\bibitem{Yonghui07}
Y.~Li, B.~Vucetic, Z.~Zhou, and M.~Dohler, ``Distributed adaptive power
  allocation for wireless relay networks,'' \emph{IEEE Trans. Wireless
  Commun.}, vol.~6, no.~3, pp. 948--958, Mar. 2007.

\bibitem{Ding08}
Z.~Ding, W.~H. Chin, and K.~Leung, ``Distributed beamforming and power
  allocation for cooperative networks,'' \emph{IEEE Trans. Wireless Commun.},
  vol.~7, no.~5, pp. 1817--1822, May 2008.

\bibitem{Chen108}
M.~Chen, S.~Serbetli, and A.~Yener, ``Distributed power allocation strategies
  for parallel relay networks,'' \emph{IEEE Trans. Wireless Commun.}, vol.~7,
  no.~2, pp. 552--561, Feb. 2008.

\bibitem{Dang10}
W.~Dang, M.~Tao, H.~Mu, and J.~Huang, ``Subcarrier-pair based resource
  allocation for cooperative multi-relay {OFDM} systems,'' \emph{IEEE Trans.
  Wireless Commun.}, vol.~9, no.~5, pp. 1640--1649, May 2010.

\bibitem{Maric10}
I.~Maric and R.~Yates, ``Bandwidth and power allocation for cooperative
  strategies in gaussian relay networks,'' \emph{IEEE Trans. Inf. Theory},
  vol.~56, no.~4, pp. 1880--1889, Apr. 2010.

\bibitem{YinSunTSP11}
Y.~Sun, X.~Zhong, T.-H. Chang, S.~Zhou, J.~Wang, and C.-Y. Chi, ``Real-time
  optimal spectrum sharing between cooperative relay and ad-hoc networks,''
  \emph{IEEE Trans. Signal Process.}, vol.~60, no.~4, pp. 1971--1985, Apr.
  2012.

\bibitem{Guoqing06}
G.~Li and H.~Liu, ``Resource allocation for {OFDMA} relay networks with
  fairness constraints,'' \emph{IEEE J. Sel. Areas Commun.}, vol.~24, no.~11,
  pp. 2061--2069, Nov. 2006.

\bibitem{WeiYuJsac07}
T.~C.-Y. Ng and W.~Yu, ``Joint optimization of relay strategies and resource
  allocations in cooperative cellular networks,'' \emph{IEEE J. Sel. Areas
  Commun.}, vol.~25, no.~2, pp. 328--339, Feb. 2007.

\bibitem{Lingfan09}
L.~Weng and R.~Murch, ``Cooperation strategies and resource allocations in
  multiuser {OFDMA} systems,'' \emph{IEEE Trans. Veh. Technol.}, vol.~58,
  no.~5, pp. 2331--2342, Jun. 2009.

\bibitem{Jitvanichphaibool09}
K.~Jitvanichphaibool, R.~Zhang, and Y.-C. Liang, ``Optimal resource allocation
  for two-way relay-assisted {OFDMA},'' \emph{IEEE Trans. Veh. Technol.},
  vol.~58, no.~7, pp. 3311--3321, Sep. 2009.

\bibitem{YingCui09}
Y.~Cui, V.~Lau, and R.~Wang, ``Distributive subband allocation, power and rate
  control for relay-assisted {OFDMA} cellular system with imperfect system
  state knowledge,'' \emph{IEEE Trans. Wireless Commun.}, vol.~8, no.~10, pp.
  5096--5102, Oct. 2009.

\bibitem{Hongxing09}
H.~Li, H.~Luo, X.~Wang, C.~Lin, and C.~Li, ``Fairness-aware resource allocation
  in ofdma cooperative relaying network,'' in \emph{IEEE International
  Conference on Communications (ICC 2009)}, Jun 2009, pp. 1--5.

\bibitem{Gong11}
X.~Gong, S.~Vorobyov, and C.~Tellambura, ``Joint bandwidth and power allocation
  with admission control in wireless multi-user networks with and without
  relaying,'' \emph{IEEE Trans. Signal Process.}, vol.~59, no.~4, pp.
  1801--1813, Apr. 2011.

\bibitem{Long08}
L.~Le and E.~Hossain, ``Cross-layer optimization frameworks for multihop
  wireless networks using cooperative diversity,'' \emph{IEEE Trans. Wireless
  Commun.}, vol.~7, no.~7, pp. 2592--2602, Jul. 2008.

\bibitem{Kadloor10}
S.~Kadloor and R.~Adve, ``Relay selection and power allocation in cooperative
  cellular networks,'' \emph{IEEE Trans. Wireless Commun.}, vol.~9, no.~5, pp.
  1676--1685, May 2010.

\bibitem{Hosseini10}
K.~Hosseini and R.~Adve, ``Relay selection and max-min resource allocation for
  multi-source {OFDM}-based mesh networks,'' in \emph{IEEE International
  Conference on Communications (ICC 2010)}, May 2010, pp. 1--6.

\bibitem{Serbetli08}
S.~Serbetli and A.~Yener, ``Relay assisted {F/TDMA} ad hoc networks: Node
  classification, power allocation and relaying strategies,'' \emph{IEEE Trans.
  Commun.}, vol.~56, no.~6, pp. 937--947, Jun. 2008.

\bibitem{LinJsac06}
X.~Lin, N.~Shroff, and R.~Srikant, ``A tutorial on cross-layer optimization in
  wireless networks,'' \emph{IEEE J. Sel. Areas Commun.}, vol.~24, no.~8, pp.
  1452--1463, Aug. 2006.

\bibitem{Xiao04}
L.~Xiao, M.~Johansson, and S.~P. Boyd, ``Simultaneous routing and resource
  allocation via dual decomposition,'' \emph{IEEE Trans. Commun.}, vol.~52,
  no.~7, pp. 1136--1144, Jul. 2004.

\bibitem{Yu06}
W.~Yu and R.~Lui, ``Dual methods for nonconvex spectrum optimization of
  multicarrier systems,'' \emph{IEEE Trans. Commun.}, vol.~54, no.~7, pp.
  1310--1322, Jul. 2006.

\bibitem{Bertsekas89}
D.~P. Bertsekas and J.~N. Tsitsiklis, \emph{Parallel and Distributed
  Computation: Numerical Methods}.\hskip 1em plus 0.5em minus 0.4em\relax
  Englewood Cliffs, NJ: Prentice-Hall, 1989.

\bibitem{Lin06}
X.~Lin and N.~Shroff, ``Utility maximization for communication networks with
  multipath routing,'' \emph{IEEE Trans. Auto. Control}, vol.~51, no.~5, pp.
  766--781, May 2006.

\bibitem{cooperative_sensor10}
D.~G\"und\"uz, E.~Erkip, A.~Goldsmith, and H.~Poor, ``Cooperative relaying in
  sensor networks,'' in \emph{Proceedings of the Fifth International Conference
  on Cognitive Radio Oriented Wireless Networks Communications (CROWNCOM
  2010)}, Jun. 2010, pp. 1--5.

\bibitem{Kim2010}
J.~Kim, X.~Lin, N.~Shroff, and P.~Sinha, ``Minimizing delay and maximizing
  lifetime for wireless sensor networks with anycast,'' \emph{IEEE/ACM Trans.
  Netw.}, vol.~18, no.~2, pp. 515--528, Apr. 2010.

\bibitem{BK:Tse05}
D.~Tse and P.~Viswanath, \emph{Fundamentals of Wireless Communications}.\hskip
  1em plus 0.5em minus 0.4em\relax Cambridge, UK: Cambridge University Press,
  2005.

\bibitem{BK:Bertsekas}
D.~P. Bertsekas, \emph{Nonlinear Programming}, 2nd~ed.\hskip 1em plus 0.5em
  minus 0.4em\relax Belmont, MA: Athena Scientific, 1999.

\bibitem{BK:BoydV04}
S.~Boyd and L.~Vandenberghe, \emph{Convex {O}ptimization}.\hskip 1em plus 0.5em
  minus 0.4em\relax Cambridge, UK: Cambridge University Press, 2004.

\bibitem{Subgradient_Boyd}
\BIBentryALTinterwordspacing
S.~Boyd, L.~Xiao, and A.~Mutapcic, ``Lecture notes of {EE392o}: Subgradient
  methods,'' Stanford University, 2003. [Online]. Available:
  \url{http://www.stanford.edu/class/ee392o}
\BIBentrySTDinterwordspacing

\bibitem{Palomar06}
D.~Palomar and M.~Chiang, ``A tutorial on decomposition methods for network
  utility maximization,'' \emph{IEEE J. Sel. Areas Commun.}, vol.~24, no.~8,
  pp. 1439--1451, Aug. 2006.

\bibitem{Palomar07}
------, ``Alternative distributed algorithms for network utility maximization:
  Framework and applications,'' \emph{IEEE Trans. Auto. Control}, vol.~52,
  no.~12, pp. 2254--2269, Dec. 2007.

\bibitem{BK:Bazaraa06}
M.~Bazaraa, H.~Sherali, and C.~Shetty, \emph{Nonlinear Programming: Theory and
  Algorithms}, 3rd~ed.\hskip 1em plus 0.5em minus 0.4em\relax Hoboken, New
  Jersey: Wiley-interscience, 2006.

\bibitem{BK:Shor85}
N.~Z. Shor, \emph{Minimization Methods for Non-Differentiable Functions}.\hskip
  1em plus 0.5em minus 0.4em\relax New York, NY: Springer-Verlag New York,
  Inc., 1985.

\end{thebibliography}

\appendix


\subsection{Proof of Lemma \ref{lem4}}
\label{distributedpowerallocation}
The Karush-Kuhn-Tucker (KKT) conditions \cite{BK:BoydV04} of \eqref{eq17} imply
\begin{align}
\label{eq300}
&\frac{g^{s,d}}{\ln2 (1\!+\!\frac{g^{s,d} P^s}{\theta^{DT}})}\!-\!{c}\left(P^s-Q^{s}\right)\!-\!\mu
\!\left\{\!\!\!\begin{array}{l}=0,\textrm{if}~P^{s}\!>\!0\\\leq 0,\textrm{if}~P^{s}\!=\!0\end{array}\right.\!\!\!.\!\!\!
\end{align}
When $P^{s}>0$, \eqref{eq300} achieves equality, $P^{s}$ is thus the positive root of a quadratic equation equivalent with \eqref{eq300}; otherwise, $P^{s}_{}=0$. Summarizing these two cases, the value of $P^{s}_{}$ is given by \eqref{eq52}.

\subsection{Proof of Lemma \ref{lem5}}
\label{distributedpowerallocation1}
Let us define the rate functions
\begin{align}\label{eq46}
R_1 =
&\frac{\theta^{DF}}{2}\log_2\left[1+\frac{2\left(P^s_{}g^{s,d}_{}+P^r_{}g^{r,d}_{}\right)}{\theta^{DF}}\right],\\
R_2
=
&\frac{\theta^{DF}}{2}\log_2\left(1+\frac{2P^s_{}g^{s,r}_{}}{\theta^{DF}}\right).\label{eq47}
\end{align}
Therefore, the achievable rate of DF relaying can be expressed as $R^{DF}=\min \left\{R_1,R_2\right\}.$ Problem \eqref{eq18} is equivalent to the following problem:
\begin{subequations}
\label{eq32}
\begin{align}
\!\!\!\!\max_{\substack{t,P^s_{},P^r_{}\geq0}}&
t\!-\!\frac{c_{}}{2}\left(P^s_{}\!-\!Q^{s}_{}\right)^2\!-\!\frac{c_{}}{2}\left(P^r_{}
\!-\!Q^{r}_{}\right)^2\!-\!\mu_{}P^s_{}\!-\!\nu_{}P^r_{}\\
\textrm{s.t.}~~~ & t\leq R_1,~~t\leq R_2,
\end{align}
\end{subequations}
\noindent where $R_1$ and $R_2$ are defined in \eqref{eq46} and \eqref{eq47}.
\noindent The Lagrangian of problem \eqref{eq32} is
\begin{align}\label{eq76}
&L(t,P^s_{},P^r_{}; \tau,\zeta,\mu_{},\nu_{})\nonumber\\
=&t+\tau(R_1-t)+\zeta(R_2-t) -\frac{c_{}}{2}\left(P^s_{}-Q^{s}_{}\right)^2\nonumber\\
&-\frac{c_{}}{2}\left(P^r_{}-Q^{r}_{}\right)^2-\mu_{}P^s_{}-\nu_{}P^r_{}.
\end{align}
The KKT optimality conditions of problem \eqref{eq32} indicate
\begin{align}\label{eq42}
&\frac{\partial L}{\partial t} = 1-\tau-\zeta=0,\\
&\tau\geq0,~R_1-t\geq0,~\tau(R_1-t)=0,\\
&\zeta\geq0,~R_2-t\geq0,~\zeta(R_2-t)=0.\label{eq43}
\end{align}
By \eqref{eq42}, the Lagrangian \eqref{eq76} can be simplified as
\begin{align}
&L(P^s_{},P^r_{}; \tau,\mu_{},\nu_{})\nonumber\\
=&\tau R_1+(1-\tau)R_2 -\frac{c_{}}{2}\left(P^s_{}-Q^{s}_{}\right)^2-\frac{c_{}}{2}\left(P^r_{}-Q^{r}_{}\right)^2\nonumber\\
&-\mu_{}P^s_{}-\nu_{}P^r_{}.
\end{align}
Moreover, from the KKT optimality conditions \eqref{eq42}-\eqref{eq43},
we obtain
\begin{align}
\label{eq35}
\left\{
\begin{array}{l}\textrm{if}~~~R_1^\star> R_2^\star,~~~\tau=0 ;\\
\textrm{if}~~~R_1^\star< R_2^\star,~~~\tau=1 ;\\
\textrm{if}~~~R_1^\star= R_2^\star,~~0\leq\tau\leq1,
\end{array}\right.
\end{align}
where $R_1^\star$ and $R_2^\star$ are corresponding rate values \eqref{eq46} and \eqref{eq47} at the optimal power allocation solution.
Therefore, problem \eqref{eq32} can be address by solving the Lagrangian maximization problem
\begin{align}\label{eq51}
\max_{P^s_{},P^r_{}\geq0}L(P^s_{},P^r_{}; \tau,\mu_{},\nu_{}),
\end{align}
for the 3 cases expressed in \eqref{eq35}.
%
%

\textbf{Case~1}:
When $R_1^\star> R_2^\star,$ we have $\tau=0$. The KKT conditions of \eqref{eq51} are
\begin{align}
\label{eq301}
&\frac{g^{s,r}}
{\ln2(1\!+\!\frac{2g^{s,r}{P}^s_{}}{{\theta}^{DF}})}\!-\!\mu_{}\!
+\!c({P}^s\!-\!Q^s)\!\left\{\!\!\!\begin{array}{l}=0,\textrm{if}~ P^{s}_{}\!>\!0\\
\leq 0,\textrm{if}~P^{s}_{}\!=\!0\end{array}\right.\!\!\!,\!\!\!\\
&-\nu_{}-c({P}^r-Q^r)\left\{\begin{array}{l}=0,\textrm{if}~P^{r}_{}>0\\\leq 0,\textrm{if}~P^{r}_{}=0\end{array}\right..\label{eq302}
\end{align}
The solution to~\eqref{eq301} is similar to that of ~\eqref{eq300}. Further considering \eqref{eq302}, the optimal source and relay power is given by \eqref{eq53}. Note that Case 1 requires $R_1^\star> R_2^\star$, which can be equivalently expressed by $g^{r,d}P^r>(g^{s,r}-g^{s,d})P^s$.

\textbf{Case~2}:
When $R_1^\star< R_2^\star,$ $\tau = 1$. The KKT conditions of ~\eqref{eq51} are given by
\begin{align}
\label{eq303}
&\frac{g^{s,d}}
{\ln2(1+\frac{2g^{s,d}{P}^s+2g^{r,d}{P}^r}{\theta^{DF}})}-\mu_{}-c({P}^s-Q^s)\nonumber\\
&~~~~~~~~~~~~~~~~~~~~~~~~~~~~~~~~~~~~\left\{\begin{array}{l}=0, ~\textrm{if}~P^{s}_{}>0\\\leq 0,~\textrm{if}~P^{s}_{}=0\end{array}\right.,\\
&\frac{g^{r,d}}
{\ln2(1+\frac{2g^{s,d}{P}^s+2g^{r,d}{P}^r}{\theta^{DF}})}-\nu_{}-c({P}^r-Q^r)\nonumber\\
&~~~~~~~~~~~~~~~~~~~~~~~~~~~~~~~~~~~~\left\{\begin{array}{l}=0, ~\textrm{if}~P^{r}_{}>0\\\leq 0,~\textrm{if}~P^{r}_{}=0\end{array}\right..\label{eq304}
\end{align}

If $P^s>0,P^r>0$, then \eqref{eq303} and \eqref{eq304} take equality. By viewing $e=g^{s,d}{P}^s+g^{r,d}{P}^r$ as a whole body, we can get a quadratic equation of $e$ from \eqref{eq303} and \eqref{eq304}, which has a positive root given by \eqref{eq73}.
Moreover, by comparing \eqref{eq303} and \eqref{eq304} with equality, we obtain
\begin{align}\label{eq44}
\left[\mu_{}+c({P}^s-Q^s)\right]g^{r,d}=\left[\nu_{}+c({P}^r-Q^r)\right]g^{s,d}.
\end{align}
Substituting \eqref{eq44} into \eqref{eq73}, the optimal power allocation solution is derived as in \eqref{eq63}.

If $P^r=0$, \eqref{eq303} reduces to a formula similar with \eqref{eq301}, and its solution is given by \eqref{eq54}.

Note that Case 2 requires $R_1^\star< R_2^\star$, which is guaranteed by $g^{r,d}P^r<(g^{s,r}-g^{s,d})P^s$. The case of $P^s=0$ and $P^r>0$ cannot happen, since it violates the condition $R_1^\star< R_2^\star$.

\textbf{Case~3}:
If $R_1^\star= R_2^\star$, the KKT conditions of \eqref{eq51} are given by
\begin{align}
\label{eq305}
&\frac{\tau g^{s,d}}{\ln2(1\!+\!\frac{2g^{s,d}{P}^s+\!2g^{r,d}{P}^r}{\theta^{DF}})}\!+\!\frac{(1-\tau)g^{s,r}}{\ln2(1\!+\frac{2g^{s,r}{P}^s_{}}{{\theta}^{DF}})} \nonumber\\
&~~~~~~~~~~~~~~~~-\mu-c({P}^s-Q^s)\left\{\begin{array}{l}=0,~\textrm{if}~P^{s}_{}>0\\\leq 0,~\textrm{if}~
P^{s}_{}=0\end{array}\right.,\\
\label{eq306}
&\frac{\tau g^{r,d}}{\ln2(1+\frac{2g^{s,d}{P}^s+2g^{r,d}{P}^r}{\theta^{DF}})}
\nonumber\\
&~~~~~~~~~~~~~~~~-\nu-c({P}^r-Q^r)\left\{\begin{array}{l}=0,~\textrm{if}~P^{r}_{}>0\\\leq 0,~\textrm{if}~ P^{r}_{}=0\end{array}\right..
\end{align}
Since $R_1^\star= R_2^\star$, we have
\begin{align}
\label{eq307}
g^{r,d}P^r=(g^{s,r}-g^{s,d})P^s.
\end{align}
If $P^s>0,P^r>0$, both \eqref{eq305} and \eqref{eq306} achieves equality. By substituting \eqref{eq307} into \eqref{eq305} and \eqref{eq306}, we can eliminate $\tau$ and derive the optimal values of $P^s$ and $P^r$. Otherwise, $P^s=P^r=0$. These two cases are summarized in \eqref{eq67}.

\ifreport
\subsection{Proof of Lemma~\ref{lem2}}
\label{proofoflemma}
We proceed to show the inequalities
\begin{eqnarray}\label{eq80}
&&\Big[\nabla_{{P}^s_{m}}R_{m}^{DT}({P}^s_{m,1})-\nabla_{{P}^s_{m}}R_{m}^{DT}({Q}_{m}^{s*})\Big]^T\big({P}^s_{m,2}-{Q}_{m}^{s*}\big)\nonumber\\
&&\leq \frac{1}{2c_{m}}(\mu_{s(m),2}-\mu_{s(m),1})^2, \forall m,
\end{eqnarray}
and
\begin{eqnarray}\label{eq81}
\!\!\!\!\!\!\!\!\!\!\!\!&&\Big[\nabla_{\vec{P}_{mj}}R_{mj}^{DF}(\vec{P}_{mj,1})-\nabla_{\vec{P}_{mj}}R_{mj}^{DF}(\vec{Q}_{mj}^*)\Big]^T\big(\vec{P}_{mj,2}-\vec{Q}_{mj}^*\big)\!\!\!\nonumber\\
\!\!\!\!\!\!\!\!\!\!\!\!&&\leq \frac{1}{2c_{mj}}\left[(\mu_{s(m),2}-\mu_{s(m),1})^2+(\nu_{j,2}-\nu_{j,1})^2\right], \forall m,j,
\end{eqnarray}
where ${P}^s_{m,i}$ is the maximizer of \eqref{eq17} corresponding to the multiplier $\mu_{s(m),i}$ for $i\in\{1,2\}$, and $\vec{P}_{mj,i}=(P_{mj,i}^s,P_{mj,i}^r)$
is the maximizer of \eqref{eq18} corresponding the multiplier
$(\mu_{s(m),i},\nu_{j,i})$. The asserted result \eqref{eq78} follows, if we take the summation of the inequalities \eqref{eq80} and \eqref{eq81} for all the possible choices of $m$ and $j$.
Since $R_m^{DT}$ is a concave function of a single variable $P_m^s$, the techniques of \cite{Lin06} can be directly used here to prove \eqref{eq80}. In the sequel, we will show \eqref{eq81} for each DF relay link. Since we only need to focus on one DF relay link, the subscripts $m,j,s(m)$ are omitted in the sequel to facilitate our expressions.

Let us associate Lagrange multipliers
$L^s\geq0$ and $L^r\geq0$ for the constraints $P^s\ge0$ and $P^r\ge0$,
respectively, in the maximization of \eqref{eq18}. Using the Karush-Kuhn-Tucker condition, we can conclude that there must exist a subgradient $(\frac{\partial R^{DF}(\vec{P})}{\partial
P^s}, \frac{\partial R^{DF}(\vec{P})}{\partial
P^r})$ of $R^{DF}$ such that
\begin{align}
\label{equ:L1} &\frac{\partial R^{DF}(\vec{P})}{\partial
P^s}-\mu-c(P^s-Q^s)+L^s=0,\\
\label{equ:L2} &\frac{\partial R^{DF}(\vec{P})}{\partial
P^r}-\nu-c(P^r-Q^r)+L^r=0,\\
\label{equ:L3}
&L^sP^s=0,~~~L^rP^r=0,
\end{align}
where $\vec{P}= (P_{}^s,P_{}^r)$ represents the source and relay power of the considered DF relay link.
From \eqref{eq21} and \eqref{eq22}, we also have
\begin{align}\label{eq82}
&\nabla_{{P}^s} R^{DF}(\vec{P})-\mu-c(P^s-Q^s)=0,\\
&\nabla_{{P}^r} R^{DF}(\vec{P})-\nu-c(P^r-Q^r)=0. \label{eq83}
\end{align}
Comparing \eqref{equ:L1} and \eqref{equ:L2} with \eqref{eq82} and \eqref{eq83}, we see that
\begin{align*}
&\nabla_{{P}^s} R^{DF}(\vec{P}) = \frac{\partial R^{DF}(\vec{P})}{\partial
P^s}+ L^s,\\
&\nabla_{{P}^r} R^{DF}(\vec{P}) = \frac{\partial R^{DF}(\vec{P})}{\partial
P^r}+ L^r.
\end{align*}
Let $\vec{Q}^*=(Q^{s*},Q^{r*})$ be the source and relay power at the stationary point. Similarly, we can obtain
\begin{align*}
&\nabla_{{P}^s} R^{DF}(\vec{Q}^*) = \frac{\partial R^{DF}(\vec{Q}^*)}{\partial
P^s}+ L^{s*},\\
&\nabla_{{P}^r} R^{DF}(\vec{Q}^*) = \frac{\partial R^{DF}(\vec{Q}^*)}{\partial
P^r}+ L^{r*}.
\end{align*}

Then, we further have
\begin{align*}
    \!\!\!\label{equ:p1} &\Big[\nabla
    R^{DF}(\vec{P}_{1})-\nabla R^{DF}(\vec{Q}^*)\Big]^T\big(\vec{P}_{2}-\vec{Q}^*\big)\nonumber\\
    \!\!\!=&\Big[\frac{\partial R^{DF}(\vec{P}_{1})}{\partial
P^s}-\frac{\partial R^{DF}(\vec{Q}^*)}{\partial
P^s}\Big](P_{2}^s-Q^{s*})\nonumber\\
\!\!\!\!&+\Big[\frac{\partial R^{DF}(\vec{P}_{1})}{\partial
P^r}-\frac{\partial R^{DF}(\vec{Q}^*)}{\partial
P^r}\Big](P_{2}^r-Q^{r*})\\
\!\!\!\!&+\big(L_{1}^s\!-\!L^{s*}\big)\big(P_{2}^s\!-\!Q^{s*}\big)
+\big(L_{1}^r\!-\!L^{r*}\big)\big(P_{2}^r\!-\!Q^{r*}\big).
\end{align*}
We can use the arguments in \cite{Lin06} to show that
\begin{align*}
&\big(L_{1}^s\!-\!L^{s*}\big)\big(P_{2}^s\!-\!Q^{s*}\big)+\big(L_{1}^r\!-\!L^{r*}\big)\big(P_{2}^r\!-\!Q^{r*}\big)\\
\leq&\frac{1}{4c}\left[(\mu_{2}-\mu_{1})^2+(\nu_{2}-\nu_{1})^2\right].
\end{align*}
Now we only need to show
\begin{eqnarray}\label{eq84}
&\Big[\frac{\partial R^{DF}(\vec{P}_{1})}{\partial
P^s}-\frac{\partial R^{DF}(\vec{Q}^*)}{\partial
P^s}\Big](P_{2}^s-Q^{s*})\nonumber\\
\!\!\!\!+&\Big[\frac{\partial R^{DF}(\vec{P}_{1})}{\partial
P^r}-\frac{\partial R^{DF}(\vec{Q}^*)}{\partial
P^r}\Big](P_{2}^r-Q^{r*})\nonumber\\
\leq&\frac{1}{4c}\left[(\mu_{2}-\mu_{1})^2+(\nu_{2}-\nu_{1})^2\right].
\end{eqnarray}
for \eqref{eq81} to hold.

Let us further define
\begin{align*}
a_1^s=&\frac{\partial R^{DF}(\vec{P}_{1})}{\partial
P^s}-\frac{\partial R^{DF}(\vec{Q}^*)}{\partial
P^s},~~b_1^s=P_{1}^s-Q^{s*},\\
a_1^r=&\frac{\partial R^{DF}(\vec{P}_{1})}{\partial
P^r}-\frac{\partial R^{DF}(\vec{Q}^*)}{\partial
P^r},~~b_1^r=P_{1}^r-Q^{r*},\\
a_2^s=&\frac{\partial R^{DF}(\vec{P}_{2})}{\partial
P^s}-\frac{\partial R^{DF}(\vec{Q}^*)}{\partial
P^s},~~b_2^s=P_{2}^s-Q^{s*}\\
a_2^r=&\frac{\partial R^{DF}(\vec{P}_{2})}{\partial
P^r}-\frac{\partial R^{DF}(\vec{Q}^*)}{\partial
P^r},~~b_2^r=P_{2}^r-Q^{r*}.
\end{align*}
For ease of notation, let us define
$\gamma_{s}\triangleq-\frac{L_{2}^s-L_{1}^s}{c(P_{2}^s-P_{1}^s)}\ge0$,
$\gamma_{r}\triangleq-\frac{L_{2}^r-L_{1}^r}{c(P_{2}^r-P_{1}^r)}\ge0$,
$\gamma_{s}^0\triangleq1+\gamma_{s}$, and
$\gamma_{r}^0\triangleq1+\gamma_{r}$. Then, according to \eqref{equ:L1} and \eqref{equ:L2}, we have
\begin{align}
\mu_{2}-\mu_{1}&=a_2^s-a_1^s+c(b_1^s-b_2^s)+L_{2}^s-L_{1}^s\\
&=\big(a_1^s-a_2^s\big)-c\gamma_{s}^0\big(b_1^s-b_2^s\big),\\
\nu_{2}-\nu_{1}&=a_2^r-a_1^r+c(b_1^r-b_2^r)+L_{2}^r-L_{1}^r\\
&=\big(a_1^r-a_2^r\big)-c\gamma_{s}^0\big(b_1^r-b_2^r\big).
\end{align}
Moreover, (\ref{eq84}) can be equivalently expressed as
\begin{equation}
  \label{equ:equiv}
  a_1^sb_2^s+a_1^rb_2^r\le\frac{1}{4c}\Big[(\mu_{2}-\mu_{1})^2+(\nu_{2}-\nu_{1})^2\Big].
\end{equation}

The concavity of $R^{DF}$ in $(P^s,P^r)$ suggests
\begin{align}\label{eq61}
a_1^sb_1^s+a_1^rb_1^r\le0,~a_2^sb_2^s+a_2^rb_2^r\le0.
\end{align}
If $a_i^sb_i^s\le0$ and $a_i^rb_i^r\le0$ ($i=1,2$), one can show
Lemma~\ref{lem2} according to the arguments in \cite{Lin06}.
However, rather than having $a_i^sb_i^s\le0$ or $a_i^rb_i^r\le0$, we
only have \eqref{eq61}. In order to handle the
difficulty, we will discuss case by case to fully explore the
structure of $R^{DF}$ defined in \eqref{eq1}. In particular, we proceed the remaining proof by breaking into three
levels of cases:
\begin{itemize}
\item [1)] Break into \textbf{Case~1-7} based on all combinations of the signs of $a_1^sb_1^s$,
$a_1^rb_1^r$, $a_2^sb_2^s$, and $a_2^rb_2^r$.\\
\item [2)] In some cases, further break into subcases (I)-(IV) based on all
combinations of the signs of $a_1^sb_2^s$ and $a_1^rb_2^r$.\\
\item [3)] In some subcases, further break into mini-cases (1)-(8) based on
all combinations of $\vec{P}_{1}$,
$\vec{P}_{2}$, and $\vec{Q}^*$ lying in region $\textcircled{1}$ or region $\textcircled{2}$, where region $\textcircled{1}$ is defined as $R^{DF}=R_1$, i.e., $g^{s,d}P^s+g^{r,d}P^r\le
g^{s,r}P^s$, and region $\textcircled{2}$ is defined as $R^{DF}=R_2$, i.e., $g^{s,d}P^s+g^{r,d}P^r\ge
g^{s,r}P^s$.
\end{itemize}


We will explain in detail how to prove \eqref{equ:equiv} for each subcase and
mini-case in \textbf{Case~1}, and also in relative detail for \textbf{Case~2} to show that the proofs in \textbf{Case~1} and \textbf{Case~2} have similar logic. Since the techniques in \textbf{Case~3-7} are quite similar with that used in \textbf{Case~1}, we will omit most of the similar steps without repeating the same proof logic.

\noindent \textbf{Case~1}:
When $a_1^sb_1^s\ge0,~a_1^rb_1^r\le0,~a_2^sb_2^s\le0,~a_2^rb_2^r\ge0$.\\
(I) We first assume $a_1^sb_2^s\ge0,~a_1^rb_2^r\ge0$, then we have
$a_1^sa_2^s\le0,~b_1^sb_2^s\ge0,~a_1^ra_2^r\ge0,~b_1^rb_2^r\le0$.
Now we further break into mini-cases:\\
(1) $\vec{P}_{1}$ is in region
$\textcircled{1}$, $\vec{P}_{2}$ is in region
$\textcircled{1}$, $\vec{Q}^*$ is in region
$\textcircled{1}$. Then,
\begin{align*}
a_1^s=&\frac{g^{s,d}}{1+\frac{2}{\theta}(g^{s,d}P_{1}^s+g^{r,d}P_{1}^r)}-\frac{g^{s,d}}{1+\frac{2}{\theta}(g^{s,d}Q^{s*}+g^{r,d}Q^{r*})},\\
a_1^r=&\frac{g^{r,d}}{1+\frac{2}{\theta}(g^{s,d}P_{1}^s+g^{r,d}P_{1}^r)}-\frac{g^{r,d}}{1+\frac{2}{\theta}(g^{s,d}Q^{s*}+g^{r,d}Q^{r*})},\\
b_1^s=&P_{1}^s-Q^{s*},~~~b_1^r=P_{1}^r-Q^{r*},\\
a_2^s=&\frac{g^{s,d}}{1+\frac{2}{\theta}(g^{s,d}P_{2}^s+g^{r,d}P_{2}^r)}-\frac{g^{s,d}}{1+\frac{2}{\theta}(g^{s,d}Q^{s*}+g^{r,d}Q^{r*})},\\
a_2^r=&\frac{g^{r,d}}{1+\frac{2}{\theta}(g^{s,d}P_{2}^s+g^{r,d}P_{2}^r)}-\frac{g^{r,d}}{1+\frac{2}{\theta}(g^{s,d}Q^{s*}+g^{r,d}Q^{r*})},\\
b_2^s=&P_{2}^s-Q^{s*},~~~b_2^r=P_{2}^r-Q^{r*}.
\end{align*}

\noindent If $b_1^s\ge0$ and $b_1^r\le0$, then we must have
$b_2^s\ge0$ and $b_2^r\ge0$ from $b_1^sb_2^s\ge0$ and $b_1^rb_2^r\le0$, respectively. From the format of $a_2^r$,
we further get $a_2^r\le0$, which contradicts
$a_2^rb_2^r\ge0$;
\\
If $b_1^s\le0$ and $b_1^r\ge0$, then $b_2^s\le0$ and $b_2^r\le0$.
Further, from the format of $a_2^r$, we have $a_2^r\ge0$, which contradicts
$a_2^rb_2^r\ge0$;\\
If $b_1^s\ge0$ and $b_1^r\ge0$, then from the format of $a_1^s$, we have $a_1^s\le0$,
which contradicts $a_1^sb_1^s\ge0$;\\
If $b_1^s\le0$ and
$b_1^r\le0$, then $a_1^s\ge0$, which contradicts
$a_1^sb_1^s\ge0$.\\
Thus, mini-case (1) is impossible under \textbf{Case~1}.\\
(2) $\vec{P}_{1}$ is in region $\textcircled{1}$,
$\vec{P}_{2}$ is in region $\textcircled{2}$, $\vec{Q}^*$ is
in region $\textcircled{1}$. Then,
\begin{align*}
a_1^s=&\frac{g^{s,d}}{1+\frac{2}{\theta}(g^{s,d}P_{1}^s+g^{r,d}P_{1}^r)}-\frac{g^{s,d}}{1+\frac{2}{\theta}(g^{s,d}Q^{s*}+g^{r,d}Q^{r*})},\\
a_1^r=&\frac{g^{r,d}}{1+\frac{2}{\theta}(g^{s,d}P_{1}^s+g^{r,d}P_{1}^r)}-\frac{g^{r,d}}{1+\frac{2}{\theta}(g^{s,d}Q^{s*}+g^{r,d}Q^{r*})},\\
b_1^s=&P_{1}^s-Q^{s*},~~~b_1^r=P_{1}^r-Q^{r*},\\
a_2^s=&\frac{ g^{s,r}}{1+\frac{2}{\theta} g^{s,r}P_{2}^s}-\frac{g^{s,d}}{1+\frac{2}{\theta}(g^{s,d}Q^{s*}+g^{r,d}Q^{r*})},\\
a_2^r=&-\frac{g^{r,d}}{1+\frac{2}{\theta}(g^{s,d}Q^{s*}+g^{r,d}Q^{r*})},\\
b_2^s=&P_{2}^s-Q^{s*},~~~b_2^r=P_{2}^r-Q^{r*}.
\end{align*}
\twocolfigure{100}{200}{fig1}{Illustration of Power Vector Regions
for Case 1.(I).(1)}{fig:fig1}

Since $a_2^r\le0$, we get $b_2^r\le0$ from $a_2^rb_2^r\ge0$ and
$b_1^r\ge0$ from $b_1^rb_2^r\le0$. Suppose $b_1^s\ge0$, then from
the format of $a_1^s$, we have $a_1^s\le0$ which contradicts
$a_1^sb_1^s\ge0$, so $b_1^s\le0$. By $b_1^sb_2^s\ge0$, we also have
$b_2^s\le0$. With the above facts and Figure.~\ref{fig:fig1}, we
have $P_{2}^s\le P_{1}^s$, i.e., $b_2^s\le b_1^s\le0$. It is
apparent that $a_2^r\le a_1^r\le0$, and $a_1^s\le0\le a_2^s$.

Suppose $P_{1}^s=0$, then $\vec{P}_1$ is on the boundary of
region $\textcircled{1}$ and $\textcircled{2}$, and it belongs to
mini-case (4) later. Without loss of generality, let $P_{1}^s\neq0$,
and then
$L_{1}^s=0,~(a_2^s-a_1^s)(L_{2}^s-L_{1}^s)=(a_2^s-a_1^s)L_{2}^s\ge0$.
Similarly, $P_{2}^r\neq0,~L_{2}^r=0$, and
$(a_2^r-a_1^r)(L_{2}^r-L_{1}^r)=(a_1^r-a_2^r)L_{1}^r\ge0$.
Also, $a_1^sb_1^s+a_1^rb_1^r+a_2^sb_2^s+a_2^rb_2^r\le0$. Thus, we
have
\begin{align*}
&\frac{1}{4c}\Big[(\mu_{2}-\mu_{1})^2+(\nu_{2}-\nu_{1})^2\Big]-a_1^sb_2^s-a_1^rb_2^r\\
=&\frac{1}{4c}(a_2^s-a_1^s)^2+\frac{\big[c(b_1^s-b_2^s)+L_{2}^s-L_{1}^s\big]^2}{4c}+\\
&\frac{1}{2c}(a_2^s-a_1^s)(L_{2}^s-L_{1}^s)-\frac{1}{2}(a_1^sb_2^s-a_2^sb_1^s)+\\
&\frac{1}{4c}(a_2^r-a_1^r)^2+\frac{\big[c(b_1^r-b_2^r)+L_{2}^r-L_{1}^r\big]^2}{4c}+\\
&\frac{1}{2c}(a_2^r-a_1^r)(L_{2}^r-L_{1}^r)-\frac{1}{2}(a_1^rb_2^r-a_2^rb_1^r)-\\
&\frac{1}{2}(a_1^sb_1^s+a_1^rb_1^r+a_2^sb_2^s+a_2^rb_2^r)\\
\ge&\frac{1}{4c}(a_2^r-a_1^r)^2-\frac{1}{2}(a_1^sb_2^s-a_2^sb_1^s+a_1^rb_2^r-a_2^rb_1^r).
\end{align*}

We want to choose $c$ carefully such that the above term is
nonnegative. Since $\vec{Q}^*$ is feasible and $Q^{s*}$ should
be bounded by $P_{{\max}}^s$, we have
$a_1^sb_2^s-a_2^sb_1^s\le\frac{g^{s,d}Q^{s*}}{1+\frac{2}{\theta}(g^{s,d}Q^{s*}+g^{r,d}Q^{r*})}+\frac{
g^{s,r}Q^{s*}}{1+\frac{2}{\theta}
g^{s,r}P_{2}^s}\le\frac{1}{2}+
g^{s,r}P_{{\max}}^s$. Similarly,
$a_1^rb_2^r-a_2^rb_1^r\le\frac{ g^{s,r}-g^{r,d}}{g^{r,d}}$
and
$(a_2^r-a_1^r)^2=\left(\frac{g^{r,d}}{1+\frac{2}{\theta}(g^{s,d}P_{1}^s+g^{r,d}P_{1}^r)}\right)^2\ge\left(\frac{g^{r,d}}{1+\frac{2}{\theta}
g^{s,r}Q^{s*}}\right)^2$. Thus, if
$\frac{1}{4c}\ge\frac{1}{2}\frac{\frac{1}{2}+
g^{s,r}P_{{\max}}^s+\frac{
g^{s,r}-g^{r,d}}{g^{r,d}}}{\left(\frac{g^{r,d}}{1+\frac{2}{\theta}
g^{s,r}P_{{\max}}^s}\right)^2}$, i.e.,
\begin{align*}
c\le\frac{\left(\frac{g^{r,d}}{1+\frac{2}{\theta}
g^{s,r}P_{{\max}}^s}\right)^2}{1+2
g^{s,r}P_{{\max}}^s+2\frac{
g^{s,r}-g^{r,d}}{g^{r,d}}}\triangleq C_{1},
\end{align*}
we have
$a_1^sb_2^s+a_1^rb_2^r\le\frac{1}{4c}\Big[(\mu_{2}-\mu_{1})^2+(\nu_{2}-\nu_{1})^2\Big]$.

Note that this direct argument is not general to other cases since
$(a_2^r-a_1^r)^2$ may not have a positive lower bound if $
\vec{P}_{1}$ and $\vec{P}_{2}$ are very close. So, breaking
into cases is still necessary.\\
(3) $\vec{P}_{1}$ is in region $\textcircled{2}$,
$\vec{P}_{2}$ is in region $\textcircled{1}$, $\vec{Q}^*$ is
in region $\textcircled{1}$.
\begin{align*}
a_1^s=&\frac{ g^{s,r}}{1+\frac{2}{\theta} g^{s,r}P_{1}^s}-\frac{g^{s,d}}{1+\frac{2}{\theta}(g^{s,d}Q^{s*}+g^{r,d}Q^{r*})},\\
a_1^r=&-\frac{g^{r,d}}{1+\frac{2}{\theta}(g^{s,d}Q^{s*}+g^{r,d}Q^{r*})},\\
b_1^s=&P_{1}^s-Q^{s*},~~~b_1^r=P_{1}^r-Q^{r*},\\
a_2^s=&\frac{g^{s,d}}{1+\frac{2}{\theta}(g^{s,d}P_{2}^s+g^{r,d}P_{2}^r)}-\frac{g^{s,d}}{1+\frac{2}{\theta}(g^{s,d}Q^{s*}+g^{r,d}Q^{r*})},\\
a_2^r=&\frac{g^{r,d}}{1+\frac{2}{\theta}(g^{s,d}P_{2}^s+g^{r,d}P_{2}^r)}-\frac{g^{r,d}}{1+\frac{2}{\theta}(g^{s,d}Q^{s*}+g^{r,d}Q^{r*})},\\
b_2^s=&P_{2}^s-Q^{s*},~~~b_2^r=P_{2}^r-Q^{r*}.
\end{align*}
\twocolfigure{100}{200}{fig2}{Illustration of Power Vector Regions
for Case 1.(I).(2)}{fig:fig2}

Since $a_1^r\le0$, we have $b_1^r\ge0$ from $a_1^rb_1^r\le0$, and $a_2^r\ge 0$ from
$a_1^ra_2^r\le0$.
Suppose $b_1^s\le0$, from the format of $a_1^s$, we have $a_1^s\ge0$, which
contradicts $a_1^sb_1^s\ge0$, so $b_1^s\ge0$. Further, we have $a_1^s\ge0$, $a_2^s\le0$,
$b_2^s\ge0$ by $a_1^sb_1^s\ge0$, $a_1^sa_2^s\le0$ and $b_1^sb_2^s\ge0$, respectively. Suppose $b_2^r\ge0$, by the format
of $a_2^r$, $a_2^r\le0$, which contradicts $a_2^rb_2^r\ge0$, so
$b_2^r\le0$ and $a_2^r\le0$. If $P_{1}^s=0$, we have $Q^{s*}=0$ from $b_1^s$, which leads to triviality. Similar with mini-case (2), to avoid triviality, let
$P_{1}^r\ne0$, $P_{2}^s\ne0$, then
$L_{1}^s=L_{1}^r=L_{2}^s=0$ and
$(a_2^r-a_1^r)(L_{2}^r-L_{1}^r)=(a_2^r-a_1^r)L_{2}^r\ge0$.

In this case, in order to apply the direct argument as in mini-case
(2), we need a constant bound for
$P_{1}^s,~P_{1}^r,~P_{2}^s$. Since $a_1^s=\frac{
g^{s,r}-g^{s,d}+\frac{2}{\theta}
g^{s,r}g^{s,d}(Q^{s*}-P_{1}^s)+\frac{2}{\theta}
g^{s,r}g^{r,d}Q^{r*}}{(1+\frac{2}{\theta}
g^{s,r}P_{1}^s)(1+\frac{2}{\theta}(g^{s,d}Q^{s*}+g^{r,d}Q^{r*}))}\ge0$,
we obtain
\begin{align*}
P_{1}^s\le&\frac{\theta}{2 g^{s,r}g^{s,d}}( g^{s,r}-g^{s,d}+\frac{2}{\theta} g^{s,r}g^{s,d}Q^{s*}+\frac{2}{\theta} g^{s,r}g^{r,d}Q^{r*})\\
\le&\frac{1}{2 g^{s,r}g^{s,d}}( g^{s,r}-g^{s,d})+P_{{\max}}^s+P_{{\max}}^r\triangleq X_{1}^s
\end{align*}
\noindent which is a constant bound for $P_{1}^s$. Note that
\begin{align*}
&a_1^sb_2^s+a_1^rb_2^r\\
=&\Big[\big(a_1^s-a_2^s\big)-c\gamma_{s}^0\big(b_1^s-b_2^s\big)\Big]\big(b_2^s-b_1^s\big)+\\
&a_2^sb_2^s+a_1^sb_1^s-a_2^sb_1^s-c\gamma_{s}^0\big(b_2^s-b_1^s\big)^2+\\
&\Big[\big(a_1^r-a_2^r\big)-c\gamma_{r}^0\big(b_1^r-b_2^r\big)\Big]\big(b_2^r-b_1^r\big)+\\
&a_2^rb_2^r+a_1^rb_1^r-a_2^rb_1^r-c\gamma_{r}^0\big(b_2^r-b_1^r\big)^2\\
=&(\mu_2-\mu_1)\big(b_2^s-b_1^s\big)-c\gamma_{s}^0\big(b_2^s-b_1^s\big)^2+\\
&(\nu_2-\nu_1)\big(b_2^r-b_1^r\big)-c\gamma_{r}^0\big(b_2^r-b_1^r\big)^2+\\
&a_2^sb_2^s+a_1^sb_1^s-a_2^sb_1^s+a_2^rb_2^r+a_1^rb_1^r-a_2^rb_1^r\\
\leq &\frac{1}{4c\gamma_{r}^0}(\mu_2-\mu_1)^2+\frac{1}{4c\gamma_{s}^0}(\nu_2-\nu_1)^2+\\
&a_2^sb_2^s+a_1^sb_1^s-a_2^sb_1^s+a_2^rb_2^r+a_1^rb_1^r-a_2^rb_1^r\\
\end{align*}

If
$a_2^sb_2^s+a_1^sb_1^s-a_2^sb_1^s+a_2^rb_2^r+a_1^rb_1^r-a_2^rb_1^r\le0$,
we are done, so we assume
$a_2^sb_2^s+a_1^sb_1^s-a_2^sb_1^s+a_2^rb_2^r+a_1^rb_1^r-a_2^rb_1^r\ge0$.
Recall that $a_2^s\le a_1^s,~a_1^r\le a_2^r$ and $a_2^sb_2^s\le0$, we have
\begin{align*}
&(a_1^s-a_2^s)b_1^s+a_2^rb_2^r\ge-a_2^sb_2^s+(a_2^r-a_1^r)b_1^r\ge-a_2^sb_2^s\\
\ge&\frac{(P_{2}^s-P_{{\max}}^s)g^{s,d}}{1+\frac{2}{\theta}(g^{s,d}P_{{\max}}^s+g^{r,d}P_{{\max}}^r)}-\frac{g^{s,d}P_{2}^s}{1+\frac{2}{\theta}(g^{s,d}P_{2}^s+g^{r,d}P_{2}^r)}\\
\ge&\frac{g^{s,d}P_{2}^s}{1+\frac{2}{\theta}(g^{s,d}P_{{\max}}^s+g^{r,d}P_{{\max}}^r)}-\\
&\frac{g^{s,d}P_{{\max}}^s}{1+\frac{2}{\theta}(g^{s,d}P_{{\max}}^s+g^{r,d}P_{{\max}}^r)}-\frac{\theta}{2}
\end{align*}

Also, $(a_1^s-a_2^s)b_1^s+a_2^rb_2^r\le
g^{s,r}P_{1}^s+g^{r,d}P_{{\max}}^r=
g^{s,r}X_{1}^s+g^{r,d}P_{{\max}}^r$. Combined with
the above inequality, we have
\begin{align*}
P_{2}^s\le&\left[1+\frac{2}{\theta}(g^{s,d}P_{{\max}}^s)\right]\times\bigg[g^{s,r}X_{1}^s+g^{r,d}P_{{\max}}^r\\
&+\frac{1}{2}+\frac{g^{s,d}P_{{\max}}^s}{1+\frac{2}{\theta}(g^{s,d}P_{{\max}}^s+g^{r,d}P_{{\max}}^r)}\bigg]\times\frac{1}{g^{s,d}}\triangleq X_{2}^s
\end{align*}
\noindent which is a constant bound for $P_{2}^s$. Further,
\begin{align*}
&(a_1^s-a_2^s)b_1^s+a_2^rb_2^r\ge(a_2^r-a_1^r)b_1^r\\
\ge&\frac{g^{r,d}}{1+\frac{2}{\theta}(g^{s,d}X_{2}^s+g^{r,d}P_{{\max}}^r)}P_{1}^r,
\end{align*}
\noindent we then obtain
\begin{align*}
P_{1}^r\le&\left[1+\frac{2}{\theta}(g^{s,d}X_{2}^s+g^{r,d}P_{{\max}}^r)\right](g^{s,r}X_{1}^s+g^{r,d}P_{{\max}}^r)\frac{1}{g^{r,d}}\\
&\triangleq
X_{1}^r
\end{align*}
\noindent which is a constant bound for $P_{1}^r$. Now we can use
the direct method as in mini-case (2),
$a_1^sb_2^s-a_2^sb_1^s+a_1^rb_2^r-a_2^rb_1^r\le
g^{s,r}X_{2}^s+g^{s,d}X_{1}^s+g^{r,d}P_{{\max}}^r+g^{r,d}X_{1}^r$
and
$(a_2^r-a_1^r)^2\ge(\frac{g^{r,d}}{1+\frac{2}{\theta}(g^{s,d}X_{2}^s+g^{r,d}X_{2}^r)})^2$.
Thus, if $\frac{1}{4c}\ge\frac{
g^{s,r}X_{2}^s+g^{s,d}X_{1}^s+g^{r,d}P_{{\max}}^r+g^{r,d}X_{1}^r}{2(\frac{g^{r,d}}{1+\frac{2}{\theta}(g^{s,d}X_{2}^s+g^{r,d}X_{2}^r)})^2}$,
i.e.,
\begin{align*}
c\le&\frac{(\frac{g^{r,d}}{1+\frac{2}{\theta}(g^{s,d}X_{2}^s+g^{r,d}X_{2}^r)})^2}{2( g^{s,r}X_{2}^s+g^{s,d}X_{1}^s+g^{r,d}P_{{\max}}^r+g^{r,d}X_{1}^r)}\triangleq C_{2},
\end{align*}
\noindent we have
$a_1^sb_2^s+a_1^rb_2^r\le\frac{1}{4c}\Big[\big(\mu_{2}-\mu_{1}\big)^2+\big(\nu_{2}-\nu_{1}\big)^2\Big]$.\\
(4) $\vec{P}_{1}$ is in region $\textcircled{2}$,
$\vec{P}_{2}$ is in region $\textcircled{2}$, $\vec{Q}^*$ is
in region $\textcircled{1}$.
\begin{align*}
a_1^s=&\frac{ g^{s,r}}{1+\frac{2}{\theta} g^{s,r}P_{1}^s}-\frac{g^{s,d}}{1+\frac{2}{\theta}(g^{s,d}Q^{s*}+g^{r,d}Q^{r*})},\\
a_1^r=&-\frac{g^{r,d}}{1+\frac{2}{\theta}(g^{s,d}Q^{s*}+g^{r,d}Q^{r*})},\\
b_1^s=&P_{1}^s-Q^{s*},~~~b_1^r=P_{1}^r-Q^{r*},\\
a_2^s=&\frac{ g^{s,r}}{1+\frac{2}{\theta} g^{s,r}P_{2}^s}-\frac{g^{s,d}}{1+\frac{2}{\theta}(g^{s,d}Q^{s*}+g^{r,d}Q^{r*})},\\
a_2^r=&-\frac{g^{r,d}}{1+\frac{2}{\theta}(g^{s,d}Q^{s*}+g^{r,d}Q^{r*})},\\
b_2^s=&P_{2}^s-Q^{s*},~~~b_2^r=P_{2}^r-Q^{r*}.
\end{align*}

Since $a_1^r\le0,~a_2^r\le0$, then $b_1^r\ge0$ and $b_2^r\le0$.
Suppose $b_1^s\le0$, then $a_1^s\ge0$ which contradicts
$a_1^sb_1^s\ge0$, so $b_1^s\ge0,~b_2^s\ge0$ and $a_2^s\le0$. Now
$b_2^s\ge0$ and $b_2^r\le0$, then it is impossible to place
$\vec{P}_{2}$ in region $\textcircled{2}$ and $\vec{Q}^*$ in
region $\textcircled{1}$ at the same time. So mini-case (4) is
impossible
under \textbf{Case~1}.\\

\noindent(5) $\vec{P}_{1}$ is in region $\textcircled{1}$,
$\vec{P}_{2}$ is in region $\textcircled{1}$, $\vec{Q}^*$ is
in region $\textcircled{2}$.
\begin{align*}
a_1^s=&\frac{g^{s,d}}{1+\frac{2}{\theta}(g^{s,d}P_{1}^s+g^{r,d}P_{1}^r)}-\frac{ g^{s,r}}{1+\frac{2}{\theta} g^{s,r}Q^{s*}},\\
a_1^r=&\frac{g^{r,d}}{1+\frac{2}{\theta}(g^{s,d}P_{1}^{s}+g^{r,d}P_{1}^{r})},\\
b_1^s=&P_{1}^s-Q^{s*},~~~b_1^r=P_{1}^r-Q^{r*},\\
a_2^s=&\frac{g^{s,d}}{1+\frac{2}{\theta}(g^{s,d}P_{2}^s+g^{r,d}P_{2}^r)}-\frac{ g^{s,r}}{1+\frac{2}{\theta} g^{s,r}Q^{s*}},\\
a_2^r=&\frac{g^{r,d}}{1+\frac{2}{\theta}(g^{s,d}P_{2}^{s}+g^{r,d}P_{2}^{r})},\\
b_2^s=&P_{2}^s-Q^{s*},~~~b_2^r=P_{2}^r-Q^{r*}.
\end{align*}

Since $a_1^r\ge0$ and $a_2^r\ge0$, then $b_1^r\le0$ and $b_2^r\ge0$.
Suppose $b_1^s\ge0$, then $a_1^s\le0$ which contradicts
$a_1^sb_1^s\ge0$, so $b_1^s\le0,~b_2^s\le0$ and $a_2^s\ge0$. Now
$b_2^s\le0$ and $b_2^r\ge0$, it is impossible to place
$\vec{P}_{2}$ in region $\textcircled{1}$ and $\vec{Q}^*$ in
region $\textcircled{2}$. So, mini-case (5) is impossible under \textbf{Case~1}.\\
(6) $\vec{P}_{1}$ is in region $\textcircled{1}$,
$\vec{P}_{2}$ is in region $\textcircled{2}$, $\vec{Q}^*$ is
in region $\textcircled{2}$.\\
In this case $a_2^r=0$. By using the result $a_1^sa_2^s\le0$, we have
\begin{align*}
&a_1^sb_2^s+a_1^rb_2^r\\
\le&a_1^sb_2^s-\frac{a_1^sa_2^s}{c\gamma_{s}^0}+a_1^rb_2^r-\frac{a_1^ra_2^r}{c\gamma_{r}^0}\\
=&\frac{1}{c\gamma_{s}^0}\bigg\{\left[(a_1^s-a_2^s)-c\gamma_{s}^0(b_1^s-b_2^s)\right]a_1^s+(c\gamma_{s}^0b_1^s-a_1^s)a_1^s\bigg\}+\\
&\frac{1}{c\gamma_{r}^0}\bigg\{\left[(a_1^r-a_2^r)-c\gamma_{r}^0(b_1^r-b_2^r)\right]a_1^r+(c\gamma_{r}^0b_1^r-a_1^r)a_1^r\bigg\}\\
\le&\frac{1}{c\gamma_{s}^0}\bigg\{(\mu_{2}-\mu_{1})a_1^s-(a_1^s)^2\bigg\}\\
&+\frac{1}{c\gamma_{r}^0}\bigg\{(\nu_{2}-\nu_{1})a_1^r-(a_1^r)^2\bigg\}+(a_1^sb_1^s+a_1^rb_1^r)\\
\le&\frac{1}{4c}\Big[(\mu_{2}-\mu_{1})^2+(\nu_{2}-\nu_{1})^2\Big],
\end{align*}
where in the last step, we have used $\gamma_{s}^0=1+\gamma_{s}\geq1$,
$\gamma_{r}^0=1+\gamma_{r}\geq1$, and $a_1^sb_1^s+a_1^rb_1^r\le0$.\\
(7) $\vec{P}_{1}$ is in region $\textcircled{2}$,
$\vec{P}_{2}$ is in region $\textcircled{1}$, $\vec{Q}^*$ is
in region $\textcircled{2}$.\\
In this case $a_1^r=0$, then from
$a_1^sb_1^s+a_1^rb_1^r\le0$, we have $a_1^sb_1^s\le0$. Further, since $a_1^sb_2^s\geq0$, we have $b_1^sb_2^s\le0$. In view of the result $b_1^rb_2^r\le0$, we then have
\begin{align*}
&a_1^sb_2^s+a_1^rb_2^r\\
\le&a_1^sb_2^s-c\gamma_{s}^0b_1^sb_2^s+a_1^rb_2^r-c\gamma_{r}^0b_1^rb_2^r\\
=&\left[(a_1^s-a_2^s)-c\gamma_{s}^0(b_1^s-b_2^s)\right]b_2^s\\
&+\left[(a_1^r-a_2^r)-c\gamma_{r}^0(b_1^r-b_2^r)\right]b_2^r\\
&+(a_2^s-c\gamma_{s}^0b_2^s)b_2^s+(a_2^r-c\gamma_{r}^0b_2^r)b_2^r\\
\le&(\mu_{2}-\mu_{1})b_2^s-c\gamma_{s}^0(b_2^s)^2+\\
&(\nu_{2}-\nu_{1})b_2^r-c\gamma_{r}^0(b_2^r)^2+(a_2^sb_2^s+a_2^rb_2^r)\\
\le&\frac{1}{4c}\Big[(\mu_{2}-\mu_{1})^2+(\nu_{2}-\nu_{1})^2\Big].
\end{align*}
(8) $\vec{P}_{1}$ is in region $\textcircled{2}$,
$\vec{P}_{2}$ is in region $\textcircled{2}$, $\vec{Q}^*$ is
in region $\textcircled{2}$.\\
$a_1^rb_1^r=0,~a_2^rb_2^r=0,~a_1^sb_1^s\le0,~a_2^sb_2^s\le0$, then
$a_1^sb_2^s+a_1^rb_2^r=a_1^sb_2^s$, and the techniques
in~\cite{Lin06} applies.\\
(II) If $a_1^sb_2^s\le0$ and $a_1^rb_2^r\le0$, then is is trivial.\\
(III) If $a_1^sb_2^s>0$ and $a_1^rb_2^r<0$, then let
$\gamma^s\triangleq-\frac{a_2^s}{c\gamma_{s}^0b_2^s}\ge0$
and
$\gamma^r\triangleq-\frac{a_2^r}{c\gamma_{r}^0b_2^r}\le0$.
\begin{align*}
&a_1^sb_2^s+a_1^rb_2^r\le(1+\gamma^s)a_1^sb_2^s+(1+\gamma^r)a_1^rb_2^r\\
=&\frac{1}{c\gamma_{s}^0}\Big\{\big[(a_1^s-a_2^s)-c\gamma_{s}^0(b_1^s-b_2^s)\big]a_1^s+(c\gamma_{s}^0b_1^s-a_1^s)a_1^s\Big\}+\\
&\frac{1}{c\gamma_{r}^0}\Big\{\big[(a_1^r-a_2^r)-c\gamma_{r}^0(b_1^r-b_2^r)\big]a_1^r+(c\gamma_{r}^0b_1^r-a_1^r)a_1^r\Big\}\\
\le&\frac{1}{4c}\big[(\mu_{2}-\mu_{1})^2+(\nu_{2}-\nu_{1})^2\big].
\end{align*}
(IV) If $a_1^sb_2^s<0$ and $a_1^rb_2^r>0$, it can be dealt with
similarly as above.\\
\textbf{Case~2}: When $a_1^sb_1^s\le0,~a_1^rb_1^r\ge0,~a_2^sb_2^s\ge0,~a_2^rb_2^r\le0$.\\
(I) If $a_1^sb_2^s\ge0$ and $a_1^rb_2^r\ge0$, then we have
$a_1^sa_2^s\ge0,~b_1^sb_2^s\le0,~a_1^ra_2^r\le0,~b_1^rb_2^r\ge0$.
Now we further break into mini-cases:\\
(1) $\vec{P}_{1}$ is in region $\textcircled{1}$,
$\vec{P}_{2}$ is in region $\textcircled{1}$, $\vec{Q}^*$ is
in region $\textcircled{1}$.
\begin{align*}
a_1^s=&\frac{g^{s,d}}{1+\frac{2}{\theta}(g^{s,d}P_{1}^s+g^{r,d}P_{1}^r)}-\frac{g^{s,d}}{1+\frac{2}{\theta}(g^{s,d}Q^{s*}+g^{r,d}Q^{r*})},\\
a_1^r=&\frac{g^{r,d}}{1+\frac{2}{\theta}(g^{s,d}P_{1}^s+g^{r,d}P_{1}^r)}-\frac{g^{r,d}}{1+\frac{2}{\theta}(g^{s,d}Q^{s*}+g^{r,d}Q^{r*})},\\
b_1^s=&P_{1}^s-Q^{s*},~~~b_1^r=P_{1}^r-Q^{r*}\\
a_2^s=&\frac{g^{s,d}}{1+\frac{2}{\theta}(g^{s,d}P_{2}^s+g^{r,d}P_{2}^r)}-\frac{g^{s,d}}{1+\frac{2}{\theta}(g^{s,d}Q^{s*}+g^{r,d}Q^{r*})},\\
a_2^r=&\frac{g^{r,d}}{1+\frac{2}{\theta}(g^{s,d}P_{2}^s+g^{r,d}P_{2}^r)}-\frac{g^{r,d}}{1+\frac{2}{\theta}(g^{s,d}Q^{s*}+g^{r,d}Q^{r*})},\\
b_2^s=&P_{2}^s-Q^{s*},~~~b_2^r=P_{2}^r-Q^{r*}.
\end{align*}

\noindent If $b_1^s\ge0$ and $b_1^r\le0$, then $b_2^s\le0$ and
$b_2^r\le0$, and we further have $a_2^s\ge0$ and $a_2^r\ge0$, which
contradicts
$a_2^sb_2^s\ge0$;\\
If $b_1^s\le0$ and $b_1^r\ge0$, then $b_2^s\ge0$ and $b_2^r\ge0$,
and we further have $a_2^s\le0$ and $a_2^r\le0$, which contradicts
$a_2^sb_2^s\ge0$;\\
If $b_1^s\ge0$ and $b_1^r\ge0$, then $a_1^s\le0$ and $a_1^r\le0$,
which contradicts $a_1^rb_1^r\ge0$;\\
If $b_1^s\le0$ and $b_1^r\le0$, then $a_1^s\ge0$ and $a_1^r\ge0$,
which contradicts $a_1^rb_1^r\ge0$.\\
So mini-case (1) is impossible under \textbf{Case~2}.\\
(2) $\vec{P}_{1}$ is in region $\textcircled{1}$,
$\vec{P}_{2}$ is in region $\textcircled{2}$, $\vec{Q}^*$ is
in region $\textcircled{1}$.
\begin{align*}
a_1^s=&\frac{g^{s,d}}{1+\frac{2}{\theta}(g^{s,d}P_{1}^s+g^{r,d}P_{1}^r)}-\frac{g^{s,d}}{1+\frac{2}{\theta}(g^{s,d}Q^{s*}+g^{r,d}Q^{r*})},\\
a_1^r=&\frac{g^{r,d}}{1+\frac{2}{\theta}(g^{s,d}P_{1}^s+g^{r,d}P_{1}^r)}-\frac{g^{r,d}}{1+\frac{2}{\theta}(g^{s,d}Q^{s*}+g^{r,d}Q^{r*})},\\
b_1^s=&P_{1}^s-Q^{s*},~~~b_1^r=P_{1}^r-Q^{r*}\\
a_2^s=&\frac{ g^{s,r}}{1+\frac{2}{\theta} g^{s,r}P_{2}^s}-\frac{g^{s,d}}{1+\frac{2}{\theta}(g^{s,d}Q^{s*}+g^{r,d}Q^{r*})},\\
a_2^r=&-\frac{g^{r,d}}{1+\frac{2}{\theta}(g^{s,d}Q^{s*}+g^{r,d}Q^{r*})},\\
b_2^s=&P_{2}^s-Q^{s*},~~~b_2^r=P_{2}^r-Q^{r*}.
\end{align*}

\twocolfigure{100}{200}{fig3}{Illustration of Power Vector Regions
for Case 2.(I).(2)}{fig:fig3}

Since $a_2^r\le0$, we get $b_2^r\ge0$ by $a_2^rb_2^r\le0$ and
$b_1^r\ge0$ by $b_1^rb_2^r\ge0$. Suppose $b_1^s\ge0$, then
$a_1^s\le0$ and $a_1^r\le0$, which contradicts $a_1^rb_1^r\ge0$, so
$b_1^s\le0$. Then, $b_2^s\ge0,~a_1^s\ge0,~a_1^r\ge0,~a_2^s\ge0$.
Without loss of generality, let
$P_{1}^s>0,~P_{1}^r>0,~P_{2}^s>0,~P_{2}^r>0$, so
$L_{1}^s=L_{2}^s=L_{1}^r=L_{2}^r=0$. Note that
$P_{1}^s\le Q^{s*}\le P_{{\max}}^s$ and
$P_{1}^r\le\frac{
g^{s,r}-g^{s,d}}{g^{r,d}}P_{1}^s=\frac{
g^{s,r}-g^{s,d}}{g^{r,d}}P_{{\max}}^s$. Since
$a_2^s=\frac{ g^{s,r}-g^{s,d}+\frac{2}{\theta}
g^{s,r}g^{s,d}(Q^{s*}-P_{2}^s)+\frac{2}{\theta}
g^{s,r}g^{r,d}Q^{r*}}{(1+\frac{2}{\theta}
g^{s,r}P_{2}^s)(1+\frac{2}{\theta}(g^{s,d}Q^{s*}+g^{r,d}Q^{r*}))}\ge0$,
we obtain
\begin{align*}
P_{2}^s\le&\frac{\theta}{2 g^{s,r}g^{s,d}}\Big( g^{s,r}-g^{s,d}+\frac{2}{\theta} g^{s,r}g^{s,d}Q^{s*}+\frac{2}{\theta} g^{s,r}g^{r,d}Q^{r*}\Big)\\
\le&\frac{1}{2 g^{s,r}g^{s,d}}(g^{s,r}-g^{s,d})+P_{{\max}}^s+P_{{\max}}^r\triangleq X_{1}^s
\end{align*}
\noindent which is a constant bound for $P_{2}^s$. Use the
similar idea as in \textbf{Case~1}. Without loss of generality,
assume
$a_2^sb_2^s+a_1^sb_1^s-a_2^sb_1^s+a_2^rb_2^r+a_1^rb_1^r-a_2^rb_1^r\ge0$,
then
$a_2^sb_2^s-a_2^sb_1^s+a_1^rb_1^r-a_2^rb_1^r\ge-a_1^sb_1^s-a_2^rb_2^r\ge-a_2^rb_2^r\ge\frac{g^{r,d}(P_{2}^r-P_{{\max}}^r)}{1+\frac{2}{\theta}(g^{s,d}P_{{\max}}^s+g^{r,d}P_{{\max}}^r)}$.
Also,
$a_2^s(b_2^s-b_1^s)+(a_1^r-a_2^r)b_1^r\le\frac{\theta}{2}+g^{r,d}\frac{
g^{s,r}-g^{s,d}}{g^{r,d}}P_{{\max}}^s=\frac{\theta}{2}+(
g^{s,r}-g^{s,d})P_{{\max}}^s$, then
\begin{align*}
P_{2}^r\le&\frac{1}{g^{r,d}}\left[\frac{1}{2}+( g^{s,r}-g^{s,d})P_{{\max}}^s\right]\times\\
&\left[1+\frac{2}{\theta}(g^{s,d}P_{{\max}}^s+g^{r,d}P_{{\max}}^r)\right]+P_{{\max}}^r\\
\triangleq &X_{2}^r
\end{align*}
\noindent which is a constant bound for $P_{2}^r$. Thus, if
$\frac{1}{4c}\ge\frac{1}{2}\frac{g^{s,d}X_{2}^s+
g^{s,r}P_{{\max}}^s+g^{r,d}X_{2}^r+(
g^{s,r}-g^{s,d})P_{{\max}}^s}{(\frac{g^{r,d}}{1+\frac{2}{\theta}
g^{s,r}P_{{\max}}^s})^2}\triangleq C_{3}$.\\
(3) $\vec{P}_{1}$ is in region $\textcircled{2}$,
$\vec{P}_{2}$ is in region $\textcircled{1}$, $\vec{Q}^*$ is
in region $\textcircled{1}$.
\begin{align*}
a_1^s=&\frac{ g^{s,r}}{1+\frac{2}{\theta} g^{s,r}P_{1}^s}-\frac{g^{s,d}}{1+\frac{2}{\theta}(g^{s,d}Q^{s*}+g^{r,d}Q^{r*})},\\
a_1^r=&-\frac{g^{r,d}}{1+\frac{2}{\theta}(g^{s,d}Q^{s*}+g^{r,d}Q^{r*})},\\
b_1^s=&P_{1}^s-Q^{s*},~~~b_1^r=P_{1}^r-Q^{r*}\\
a_2^s=&\frac{g^{s,d}}{1+\frac{2}{\theta}(g^{s,d}P_{2}^s+g^{r,d}P_{2}^r)}-\frac{g^{s,d}}{1+\frac{2}{\theta}(g^{s,d}Q^{s*}+g^{r,d}Q^{r*})},\\
a_2^r=&\frac{g^{r,d}}{1+\frac{2}{\theta}(g^{s,d}P_{2}^s+g^{r,d}P_{2}^r)}-\frac{g^{r,d}}{1+\frac{2}{\theta}(g^{s,d}Q^{s*}+g^{r,d}Q^{r*})},\\
b_2^s=&P_{2}^s-Q^{s*},~~~b_2^r=P_{2}^r-Q^{r*}.
\end{align*}

\twocolfigure{100}{200}{fig4}{Illustration of Power Vector Regions
for Case 2.(I).(3)}{fig:fig4}

Since $a_1^r\le0$, we get $b_1^r\le0$ and $b_2^r\le0$. Suppose
$b_2^s\le0$, then $a_2^s\ge0$ and $a_2^r\ge0$, which contradicts
$a_2^sb_2^s\ge0$, so
$b_2^s\ge0,~b_1^s\le0,~a_1^s\ge0,~a_2^s\ge0,~a_2^r\ge0$. Without
loss of generality, let $P_{1}^r>0$ and $P_{2}^s>0$, so
$L_{1}^r=0$ and $L_{2}^s=0$. Further, $P_{1}^s\le
P_{2}^s$, then $a_2^s\le a_1^s$, so
$(a_2^s-a_1^s)(-L_{1}^s)\ge0$ and $(a_2^r-a_1^r)L_{2}^r\ge0$.
Note that $P_{1}^s\le Q^{s*}\le P_{{\max}}^s$,
$P_{1}^r\le Q^{r*}\le P_{{\max}}^r$, $P_{2}^r\le
Q^{r*}\le P_{{\max}}^r$, and
$a_2^s=g^{s,d}\frac{\frac{2}{\theta}[g^{s,d}(Q^{s*}-P_{2}^s)+g^{r,d}(Q^{r*}-P_{2}^r)]}{[1+\frac{2}{\theta}(g^{s,d}P_{2}^s+g^{r,d}P_{2}^r)][1+\frac{2}{\theta}(g^{s,d}Q^{s*}+g^{r,d}Q^{r*})]}\ge0$,
then $P_{2}^s\le
Q^{s*}+\frac{g^{r,d}}{g^{s,d}}Q^{r*}=P_{{\max}}^s+\frac{g^{r,d}}{g^{s,d}}P_{{\max}}^r$.
Thus, if $\frac{1}{4c}\ge\frac{1}{2}\frac{
g^{s,r}(P_{{\max}}^s+\frac{g^{r,d}}{g^{s,d}}P_{{\max}}^r)+g^{s,d}P_{{\max}}^s+g^{r,d}P_{{\max}}^r+g^{r,d}P_{{\max}}^r}{\left(\frac{g^{r,d}}{1+\frac{2}{\theta}\left[g^{s,d}(P_{{\max}}^s+\frac{g^{r,d}}{g^{s,d}}P_{{\max}}^r)+g^{r,d}P_{{\max}}^r\right]}\right)^2}$,
i.e.,
\begin{align*}
c\le&\frac{1}{2}\frac{\left(\frac{g^{r,d}}{1+\frac{2}{\theta}(g^{s,d}P_{{\max}}^s+2g^{r,d}P_{{\max}}^r)}\right)^2}{( g^{s,r}+g^{s,d})P_{{\max}}^s+(\frac{ g^{s,r}g^{r,d}}{g^{s,d}}+2g^{r,d})P_{{\max}}^r}\nonumber\\
\triangleq&C_{4}
\end{align*}
(4) $\vec{P}_{1}$ is in region $\textcircled{2}$,
$\vec{P}_{2}$ is in region $\textcircled{2}$, $\vec{Q}^*$ is
in region $\textcircled{1}$.
\begin{align*}
a_1^s=&\frac{ g^{s,r}}{1+\frac{2}{\theta} g^{s,r}P_{1}^s}-\frac{g^{s,d}}{1+\frac{2}{\theta}(g^{s,d}Q^{s*}+g^{r,d}Q^{r*})},\\
a_1^r=&-\frac{g^{r,d}}{1+\frac{2}{\theta}(g^{s,d}Q^{s*}+g^{r,d}Q^{r*})},\\
b_1^s=&P_{1}^s-Q^{s*},~~~b_1^r=P_{1}^r-Q^{r*}\\
a_2^s=&\frac{ g^{s,r}}{1+\frac{2}{\theta} g^{s,r}P_{2}^s}-\frac{g^{s,d}}{1+\frac{2}{\theta}(g^{s,d}Q^{s*}+g^{r,d}Q^{r*})},\\
a_2^r=&-\frac{g^{r,d}}{1+\frac{2}{\theta}(g^{s,d}Q^{s*}+g^{r,d}Q^{r*})},\\
b_2^s=&P_{2}^s-Q^{s*},~~~b_2^r=P_{2}^r-Q^{r*}.
\end{align*}
$a_1^r=a_2^r<0$, this contradicts $a_1^ra_2^r\le0$. So mini-case
(4) is
impossible under \textbf{Case~2}.\\
(5) $\vec{P}_{1}$ is in region $\textcircled{1}$,
$\vec{P}_{2}$ is in region $\textcircled{1}$, $\vec{Q}^*$ is
in region $\textcircled{2}$.
\begin{align*}
a_1^s=&\frac{g^{s,d}}{1+\frac{2}{\theta}(g^{s,d}P_{1}^s+g^{r,d}P_{1}^r)}-\frac{ g^{s,r}}{1+\frac{2}{\theta} g^{s,r}Q^{s*}},\\
a_1^r=&\frac{g^{r,d}}{1+\frac{2}{\theta}(g^{s,d}P_{1}^{s}+g^{r,d}P_{1}^{r})},\\
b_1^s=&P_{1}^s-Q^{s*},~~~b_1^r=P_{1}^r-Q^{r*}\\
a_2^s=&\frac{g^{s,d}}{1+\frac{2}{\theta}(g^{s,d}P_{2}^s+g^{r,d}P_{2}^r)}-\frac{ g^{s,r}}{1+\frac{2}{\theta} g^{s,r}Q^{s*}},\\
a_2^r=&\frac{g^{r,d}}{1+\frac{2}{\theta}(g^{s,d}P_{2}^{s}+g^{r,d}P_{2}^{r})},\\
b_2^s=&P_{2}^s-Q^{s*},~~~b_2^r=P_{2}^r-Q^{r*}.
\end{align*}
$a_1^r\ge0,~a_2^r\ge0$, this contradicts $a_1^ra_2^r\le0$. So
mini-case
(5) is impossible under \textbf{Case~2}.\\
(6),(7),(8) are all similar as in \textbf{Case~1}.\\
(II) If $a_1^sb_2^s\le0$ and $a_1^rb_2^r\le0$, it is trivial;\\
(III) If $a_1^sb_2^s\ge0$ and $a_1^rb_2^r\le0$, then let
$\gamma^s\triangleq-\frac{c\gamma_{s}^0b_1^s}{a_1^s}\ge0$
and
$\gamma^r\triangleq-\frac{c\gamma_{r}^0b_1^r}{a_1^r}\le0$ and similar argument as in \textbf{Case~1} follows;\\
(IV) If $a_1^sb_2^s\le0$ and $a_1^rb_2^r\ge0$, it is similar as above.\\
\textbf{Case~3}: When
$a_1^sb_1^s\le0,~a_1^rb_1^r\ge0,~a_2^sb_2^s\le0,~a_2^rb_2^r\ge0$.\\
We only prove for the subcase when $a_1^sb_2^s>0$ and
$a_1^rb_2^r>0$. The same proof ideas as in \textbf{Case~1} can be
applied in all other subcases.

Let $\gamma^s\triangleq\frac{a_2^sb_1^s}{a_1^sb_2^s}\ge0$ and
$\gamma^r\triangleq\frac{a_2^rb_1^r}{a_1^rb_2^r}\ge0$, then
\begin{align*}
&a_1^sb_2^s+a_1^rb_2^r\\
\le&(1+\gamma^s)a_1^sb_2^s+(1+\gamma^r)a_1^rb_2^r\\
=&\left[(a_1^s-a_2^s)-c\gamma_{s}^0(b_1^s-b_2^s)\right](b_2^s-b_1^s)\\
&+\left[(a_1^r-a_2^r)-c\gamma_{r}^0(b_1^r-b_2^r)\right](b_2^r-b_1^r)\\
&+(a_1^sb_1^s+a_2^sb_2^s)+(a_1^rb_1^r+a_2^rb_2^r)\\
&-c\gamma_{s}^0(b_2^s-b_1^s)^2-c\gamma_{r}^0(b_2^r-b_1^r)^2\\
\le&(\mu_{2}-\mu_{1})(b_2^s-b_1^s)-c\gamma_{s}^0(b_2^s-b_1^s)^2\\
&+(\nu_{2}-\nu_{1})(b_2^r-b_1^r)-c\gamma_{r}^0(b_2^r-b_1^r)^2\\
&+(a_1^sb_1^s+a_1^rb_1^r)+(a_2^sb_2^s+a_2^rb_2^r)\\
\le&\frac{1}{4c}\Big[(\mu_{2}-\mu_{1})^2+(\nu_{2}-\nu_{1})^2\Big].
\end{align*}
\textbf{Case~4}: When $a_1^sb_1^s\ge0,~a_1^rb_1^r\le0,~a_2^sb_2^s\ge0,~a_2^rb_2^r\le0$.\\
The similar argument as in \textbf{Case~3} can be applied here.\\
\textbf{Case~5}: When $a_1^sb_1^s\le0,~a_1^rb_1^r\le0$.\\
We only prove for the subcase when $a_1^sb_2^s>0$ and
$a_1^rb_2^r>0$. The same proof ideas as in \textbf{Case~1} can be
applied in all other subcases. Let
$\gamma^s\triangleq-\frac{c\gamma_{s}^0b_1^s}{a_1^s}\ge0,~\gamma^r\triangleq-\frac{c\gamma_{r}^0b_1^r}{a_1^r}\ge0$,
then
\begin{align*}
&a_1^sb_2^s+a_1^rb_2^r\\
\le&(1+\gamma^s)a_1^sb_2^s+(1+\gamma^r)a_1^rb_2^r\\
=&\left[(a_1^s-a_2^s)-c\gamma_{s}^0(b_1^s-b_2^s)\right]b_2^s\\
&+\left[(a_1^r-a_2^r)-c\gamma_{r}^0(b_1^r-b_2^r)\right]b_2^r\\
&+(a_2^s-c\gamma_{s}^0b_2^s)b_2^s+(a_2^r-c\gamma_{r}^0b_2^r)b_2^r\\
\le&(\mu_{2}-\mu_{1})b_2^s-c\gamma_{s}^0(b_2^s)^2+(\nu_{2}-\nu_{1})b_2^r-c\gamma_{r}^0(b_2^r)^2+\\
&(a_2^sb_2^s+a_2^rb_2^r)\\
\le&\frac{1}{4c}\Big[(\mu_{2}-\mu_{1})^2+(\nu_{2}-\nu_{1})^2\Big].
\end{align*}
\textbf{Case~6}: $a_2^sb_2^s\le0,~a_2^rb_2^r\le0$.\\
We only prove for the subcase when $a_1^sb_2^s>0$ and
$a_1^rb_2^r>0$. The same proof ideas as in \textbf{Case~1} can be
applied in all other subcases. Let
$\gamma^s\triangleq-\frac{a_2^s}{c\gamma_{s}^0b_2^s}\ge0,~\gamma^r\triangleq-\frac{a_2^r}{c\gamma_{r}^0b_2^r}\ge0$,
then
\begin{align*}
&a_1^sb_2^s+a_1^rb_2^r\\
\le&(1+\gamma^s)a_1^sb_2^s+(1+\gamma^r)a_1^rb_2^r\\
=&\frac{1}{c\gamma_{s}^0}\bigg\{\left[(a_1^s-a_2^s)-c\gamma_{s}^0(b_1^s-b_2^s)\right]a_1^s+(c\gamma_{s}^0b_1^s-a_1^s)a_1^s\bigg\}+\\
&\frac{1}{c\gamma_{r}^0}\bigg\{\left[(a_1^r-a_2^r)-c\gamma_{r}^0(b_1^r-b_2^r)\right]a_1^r+(c\gamma_{r}^0b_1^r-a_1^r)a_1^r\bigg\}\\
\le&\frac{1}{c\gamma_{s}^0}\bigg\{(\mu_{2}-\mu_{1})a_1^s-(a_1^s)^2\bigg\}+\\
&\frac{1}{c\gamma_{r}^0}\bigg\{(\nu_{2}-\nu_{1})a_1^r-(a_1^r)^2\bigg\}+(a_1^sb_1^s+a_1^rb_1^r)\\
\le&\frac{1}{4c}\Big[(\mu_{2}-\mu_{1})^2+(\nu_{2}-\nu_{1})^2\Big].
\end{align*}
\textbf{Case~7}: When $a_1^sb_1^s>0,~a_1^rb_1^r\ge0$; or
$a_2^sb_2^s>0,~a_2^rb_2^r\ge0$. This case cannot happen, otherwise,
it contradicts (\ref{eq61}).~\qed

\else
\fi
\end{document}